\documentclass[english,a4paper,10pt]{article}
\usepackage[utf8]{inputenc}

\usepackage[T1]{fontenc}
\usepackage[a4paper, total={6in, 8in}]{geometry}

\usepackage{amsmath}
\usepackage[english]{babel}

\usepackage{framed}
\usepackage{amssymb}
\sloppy

\usepackage{todonotes}
\setlength{\marginparwidth}{3.5cm}

\usepackage{hyperref}
\usepackage{cleveref}
\crefname{algocf}{Algorithm}{Algorithms}
\usepackage[vlined,ruled,noend,linesnumbered]{algorithm2e}
\SetArgSty{text}
\DontPrintSemicolon
\usepackage{bussproofs}
\EnableBpAbbreviations
\usepackage{mathtools}
\usepackage{scalerel}
\usepackage{tikz}
\usetikzlibrary{shapes.geometric, patterns}
\usetikzlibrary{arrows.meta}
\usetikzlibrary{calc}
\usetikzlibrary{shapes.misc, positioning}
\usepackage{csquotes}
\usepackage{booktabs}
\usepackage{thmtools}
\usepackage{thm-restate}
\usepackage{enumitem}
\usepackage{pdfpages}
\usepackage{nameref}
\usepackage{refcount}

\relpenalty10000
\binoppenalty10000

\usepackage{stmaryrd}

\newif\iftr
\trtrue
\usepackage{amsthm}
\usepackage[strings]{underscore}
 \newtheorem{theorem}{Theorem}

 \newtheorem{example}{Example}

\newcommand{\patrick}[2][inline]{\ttodo{#1}{Patrick}{green}{#2}}

\NewCommandCopy{\oldtodo}{\todo}
\newcommand{\ttodo}[4]{\ifthenelse{\equal{#1}{inline}}{\oldtodo[inline,color = #3]{#2: #4}}{\oldtodo[color=#3]{#2: #4}}}
\renewcommand{\todo}[2][inline]{\ttodo{#1}{TODO}{orange}{#2}}

\newcommand{\hide}[1]{}



\newcommand{\wrt}{w.r.t.\ }
\newcommand{\st}{s.t.\ }
\newcommand{\ie}{i.e.\ }
\newcommand{\eg}{e.g.\ }
\newcommand{\etc}{etc.\xspace}
\newcommand{\cf}{cf.\ }
\newcommand{\vs}{vs.\ }
\newcommand{\aka}{a.k.a.\ }



\newlength{\myl}
\newcommand{\longsquigarrow}[1]{
    \settowidth{\myl}{$~_{#1}$}
    \raisebox{-0.01cm}{\xymatrix@C=\myl{
            {}\ar@{~>}[r]^{~_{#1}}&{}
        }
    }
}








%

\newcommand{\PTime}{\text{\upshape{\textsc{P}}}\xspace}

\newcommand{\NP}{\text{\upshape{\textsc{NP}}}\xspace}

\newcommand{\ExpTime}{\text{\upshape{\textsc{ExpTime}}}\xspace}

\newcommand{\NExpTime}{\text{\upshape{\textsc{NExpTime}}}\xspace}

\newcommand{\Cmc}{\ensuremath{\mathcal{C}}\xspace}
\newcommand{\Dmc}{\ensuremath{\mathcal{D}}\xspace}
\newcommand{\Imc}{\ensuremath{\mathcal{I}}\xspace}

\newcommand{\Lmc}{\ensuremath{\mathcal{L}}\xspace}

\newcommand{\Omc}{\ensuremath{\mathcal{O}}\xspace} 
 
\newcommand{\Tmc}{\ensuremath{\mathcal{T}}\xspace}

\newcommand{\Nbb}{\ensuremath{\mathbb{N}}\xspace} 
\newcommand{\Qbb}{\ensuremath{\mathbb{Q}}\xspace}

\newcommand{\ALC}{\ensuremath{\mathcal{ALC}}\xspace}

\newcommand{\EL}{\ensuremath{\mathcal{E}\hspace{-0.1em}\mathcal{L}}\xspace}
\newcommand{\ELbot}{\ensuremath{\EL_\bot}\xspace}

\newcommand{\tup}[1]{\langle #1 \rangle}

\newcommand{\NC}{\ensuremath{\textsf{N}_\textsf{C}}\xspace}
\newcommand{\NR}{\ensuremath{\textsf{N}_\textsf{R}}\xspace}

\newcommand{\NV}{\ensuremath{\textsf{N}_\textsf{V}}\xspace}
\newcommand{\NPred}{\ensuremath{\textsf{N}_\textsf{P}}\xspace}

%


%

%

%


%







\newcommand{\ex}[1]{\ensuremath{\textsf{\upshape{#1}}}\xspace}
 
\newcommand{\p}{\ensuremath{\mathcal{P}}\xspace}

\newcommand{\R}{\ensuremath{\mathfrak{D}}\xspace}

\newcommand{\true}{\ensuremath{\mathsf{true}}\xspace}
\newcommand{\false}{\ensuremath{\mathsf{false}}\xspace}




\newcommand{\Lethe}{\textsc{Lethe}\xspace}
\newcommand{\Elk}{\textsc{Elk}\xspace}



\newcommand{\DQlin}{\ensuremath{\mathcal{D}_{\mathbb{Q},\textit{lin}}}\xspace}
\newcommand{\DQgr}{\ensuremath{\mathcal{D}_{\mathbb{Q},\textit{diff}}}\xspace}

\newcommand{\ELbotD}[1][\Dmc]{\ensuremath{\ELbot[#1]}\xspace}
\newcommand{\ALCD}[1][\Dmc]{\ensuremath{\ALC[#1]}\xspace}
\newcommand{\ar}{\textsf{ar}\xspace}

\newcommand{\class}{\ensuremath{\mathsf{CL}}\xspace}
\newcommand{\sub}{\ensuremath{\mathsf{sub}}\xspace}
\newcommand{\Nbf}{\ensuremath{\mathcal{N}}\xspace}
\newcommand{\Dbf}{\ensuremath{\mathbf{D}}\xspace}

\newcommand{\Rnotequal}{\ensuremath{\mathsf{R_{\neq}}}\xspace}
\newcommand{\Rsmaller}{\ensuremath{\mathsf{R_<}}\xspace}
\newcommand{\Rnotequalplus}{\ensuremath{\mathsf{R_{\neq}^+}}\xspace}
\newcommand{\Rzero}{\ensuremath{\mathsf{R_0}}\xspace}
\newcommand{\Rdiff}{\ensuremath{\mathsf{R_-}}\xspace}
\newcommand{\Rmirror}{\ensuremath{\mathsf{R_\leftrightarrow}}\xspace}
\newcommand{\Rtrans}{\ensuremath{\mathsf{R_+}}\xspace}
\newcommand{\Requal}{\ensuremath{\mathsf{R_=}}\xspace}
\newcommand{\Rgreater}{\ensuremath{\mathsf{R_>}}\xspace}
\newcommand{\Rbot}{\ensuremath{\mathsf{R_\bot}}\xspace}
\newcommand{\Rconstantbound}{\ensuremath{\mathsf{R_>^+}}\xspace}
\newcommand{\Rweaken}{\ensuremath{\mathsf{R_>^-}}\xspace}

\newcommand{\saturate}{\ensuremath{\mathsf{saturate}}\xspace}

\begin{document}
\iftr{
    \title{Combining Proofs for Description Logic and Concrete Domain 
    Reasoning (Technical Report)}
}
\else{
    \title{Combining Proofs for Description Logic and Concrete Domain Reasoning}
}
\fi
%
%
%
%
%
\author{Christian Alrabbaa \and Franz Baader \and Stefan Borgwardt \and Patrick Koopmann \and Alisa Kovtunova}

\date{}
\maketitle

\begin{abstract}
Logic-based approaches to AI have the advantage that their behavior can in principle be explained 
with the help of proofs of the computed consequences.
For ontologies based on Description Logic (DL), we have put this advantage into practice
by showing how proofs for consequences derived by DL reasoners can be computed and displayed in a user-friendly way. 
 %
 %
However, these methods are insufficient in applications where also numerical reasoning is relevant.
The present paper considers proofs for DLs extended with concrete domains (CDs) based on the rational numbers, which leave reasoning tractable if integrated into the lightweight DL \ELbot.
 %
 %
Since no implemented DL reasoner supports these CDs, we first develop reasoning procedures for them, and show how they can be combined with reasoning approaches for pure DLs, both for \ELbot and the more expressive DL \ALC. These procedures are designed such that it is easy to extract proofs from them. We show how the extracted CD proofs can be combined with proofs on the DL side
into integrated proofs that explain both the DL and the CD reasoning. 
 %
\end{abstract}

\section{Introduction}

Description Logics (DLs) \cite{DBLP:books/daglib/0041477} are a well-investigated family of logic-based knowledge representation languages,
which are frequently used to formalize ontologies for various application domains.
As the sizes of DL-based ontologies grow, tools that support improving the quality of such ontologies become more important.
DL reasoners\footnote{%
See \href{http://owl.cs.manchester.ac.uk/tools/list-of-reasoners/}{http://owl.cs.manchester.ac.uk/tools/list-of-reasoners/}}
can be used to detect inconsistencies and to infer other implicit consequences, such as subsumption relationships.
However, for developers or users of DL-based ontologies, it is often hard to understand why a consequence computed by the reasoner
actually follows from the given, possibly very large ontology. In principle, such a consequence can be explained by producing a proof for it,
which shows how the consequence can be derived from the axioms in the ontology by applying certain easy-to-understand inference rules.
 %
 %
In recent work, we have investigated how proofs for consequences derived by DL reasoners can be computed~\cite{DBLP:conf/lpar/AlrabbaaBBKK20,DBLP:conf/cade/AlrabbaaBBKK21}
and 
displayed~\cite{https://doi.org/10.1111/cgf.14730} 
 in a user-friendly 
way~\cite{DBLP:conf/ruleml/AlrabbaaBHKKRW22}.
However, like previous work~\cite{DBLP:conf/semweb/HorridgePS10,DBLP:conf/dlog/KazakovKS17}, this was restricted to DLs without concrete domains.

Concrete domains~\cite{DBLP:conf/ijcai/BaaderH91,DBLP:conf/aiml/Lutz02} (CDs) have been introduced 
 to enable reference to concrete objects (such as numbers) 
and predefined predicates on these objects (such as numerical comparisons) when defining concepts. For example, assume that we measure the systolic and the diastolic
blood pressure of patients. Then we can describe patients with a pulse pressure of 25\,mmHg as 
$
\ex{Patient}\sqcap[\ex{sys}-\ex{dia} = 25],
$
where $\ex{sys}$ and $\ex{dia}$ are \emph{features} that are interpreted as partial functions that return the systolic and the diastolic blood pressure
of a patient, respectively, as rational
numbers (if available). We can then state that such patients need attention using the general concept inclusion (GCI) 
$$
\ex{Patient}\sqcap[\ex{sys}-\ex{dia} = 25] \sqsubseteq \ex{NeedAttention}.
$$
In the presence of GCIs, integrating a CD into a DL may cause undecidability~\cite{DBLP:journals/tocl/Lutz04,DBLP:journals/jar/BaaderR22} even if solvability of the 
constraint systems that can be formulated in the CD (in our example, sets of constraints of the form $x-y = q$ for $q\in\Qbb$) 
is decidable. One way to overcome this problem is to disallow role paths~\cite{DBLP:conf/cade/HaarslevMW01,DBLP:conf/dlog/PanH02,DBLP:conf/ijcai/BaaderBL05} in concrete domain restrictions, 
which means that these restrictions can only constrain feature values of single individuals, as in our example. Comparing feature values of different individuals, such as the age of a woman 
with that of her children, is then no longer possible.
 
For tractable (i.e., polynomially decidable) DLs like \ELbot, preserving decidability is not sufficient: one wants to preserve tractability. As shown in~\cite{DBLP:conf/ijcai/BaaderBL05},
this is the case if one integrates a so-called p-admissible concrete domain into \ELbot.
The only numerical p-admissible concrete domain exhibited in~\cite{DBLP:conf/ijcai/BaaderBL05} is the CD
$\DQgr$, which supports constraints of the form $x=q$, $x>q$, and $x+q=y$ (for constants $q\in\Qbb$). 
Recently, additional p-admissible concrete domains have been introduced in~\cite{DBLP:journals/jar/BaaderR22}, such as \DQlin, whose constraints are given by linear equations 
$\sum_{i=1}^na_ix_i=b$. In the present paper, we will concentrate on these two p-admissible CDs, though the developed ideas and techniques can also be used for other CDs.
The constraint used in  our example can be expressed in both \DQgr and \DQlin. 
Unfortunately, no implemented DL reasoner supports these two CDs. In particular, the highly efficient \ELbot reasoner \Elk~\cite{DBLP:journals/jar/KazakovKS14} does not support any 
concrete domain. Instead of modifying \Elk or implementing our own reasoner for \ELbot with concrete domains, we develop here an iterative algorithm that interleaves
\Elk reasoning with concrete domain reasoning. For the CD reasoning, we could in principle employ existing algorithms and implementations, like Gaussian elimination or 
the simplex method~\cite{Turner1995GaussEW,DBLP:conf/cav/DutertreM06} for \DQlin, and SMT systems that can deal with 
difference logic~\cite{DBLP:series/txtcs/KroeningS16,DBLP:conf/sat/ArmandoCGM04a}, such as Z3,\footnote{%
\href{https://theory.stanford.edu/~nikolaj/programmingz3.html}{https://theory.stanford.edu/\~{}nikolaj/programmingz3.html}}
for $\DQgr$. However, since our main purpose is to generate proofs, we develop our own reasoning procedures for $\DQgr$ and \DQlin, which may not be as efficient
as existing ones, but can easily be adapted such that they produce proofs.

Proofs for reasoning results in \ELbot with a p-admissible CD can in principle be represented using the calculus introduced in~\cite{DBLP:conf/ijcai/BaaderBL05} or an appropriate
extension of the calculus employed by \Elk. However, in these calculi, the result of CD reasoning (i.e., that a set of constraints is unsatisfiable or entails another constraint)
is used as an applicability condition for certain rules, 
but the CD reasoning leading to the satisfaction of the conditions is not explained. Instead of augmenting such a proof with separate
proofs on the CD side that show why the applicability conditions are satisfied, our goal is to produce a single proof that explains both the \ELbot and the CD reasoning in
a uniform proof format. 

We also consider the integration of the CDs $\DQgr$ and $\DQlin$ into the more expressive DL \ALC. To this purpose, we develop a new calculus for subsumption w.r.t.\ \ALC 
ontologies, which is inspired by the one in~\cite{FORGETTING_ALC}, but has a better worst-case complexity,
and then show how it can be extended to deal with concrete domain restrictions.
We have implemented our reasoning and proof extraction approaches for DLs with concrete domains and have evaluated them on several self-created benchmarks
designed specifically to challenge the CD reasoning and proof generation 
capabilities.
\iftr{More details about the experiments can be found in~\cite{zenodo}.}\else{Proofs for all results and more details about the experiments can be found in~\cite{ourarxive,zenodo}. }\fi


%


\section{Description Logics with Concrete Domains}

We recall the DLs \ELbot and \ALC~\cite{DBLP:books/daglib/0041477}, and then discuss their extensions \ELbotD and \ALCD with a concrete domain~\Dmc~\cite{DBLP:conf/ijcai/BaaderH91,DBLP:conf/ijcai/BaaderBL05}. Following~\cite{DBLP:journals/jar/BaaderR22}, we use square brackets to indicate that no role paths are allowed. We also introduce the two p-admissible concrete domains \DQgr and \DQlin~\cite{DBLP:conf/ijcai/BaaderBL05,DBLP:journals/jar/BaaderR22}.

\subsection{Description Logics}
%
%
Starting with disjoint, countably infinite sets of \emph{concept} and \emph{role names} \NC and  \NR,
%
\emph{\ELbot concepts} are defined by the grammar $C,D::=\top\mid\bot\mid A\mid 
C\sqcap D\mid \exists r.C$, where $A\in\NC$ and $r\in\NR$. In \ALC, we additionally have negation $\lnot C$ as  concept constructor.
As usual, we then define $C\sqcup D:=\lnot(\lnot C\sqcap\lnot D)$ and $\forall r.C:=\lnot\exists r.\lnot C$.
An \emph{\ALC (\ELbot) TBox} (\aka \emph{ontology})~\Omc is a finite set of \emph{general concept inclusions (GCIs}, \aka \emph{axioms)} $C\sqsubseteq D$ for \ALC (\ELbot) concepts~$C$ and~$D$.
%
We denote by $\sub(\Omc)$ the set of subconcepts of all concepts appearing in~\Omc.
%

%
An \emph{interpretation} is a pair $\Imc=(\Delta^\Imc,\cdot^\Imc)$, where the \emph{domain} $\Delta^\Imc$ is a non-empty set, and the \emph{interpretation function} $\cdot^\Imc$ assigns to every concept name $A\in\NC$ a set $A^\Imc\subseteq\Delta^\Imc$ and to every role name $r\in\NR$ a binary relation $r^\Imc\subseteq\Delta^\Imc\times\Delta^\Imc$. 
This function is extended to complex concepts by defining $\top^\Imc:=\Delta^\Imc$, $\bot^\Imc:=\emptyset$, 
$(\exists r.C)^\Imc:=\{d\in\Delta^\Imc\mid\exists e\in\Delta^\Imc.\,(d,e)\in r^\Imc\land e\in C^\Imc\}$, 
$(\lnot C)^\Imc=\Delta^\Imc\setminus C^\Imc$, and
$(C\sqcap D)^\Imc:=C^\Imc\cap D^\Imc$. 
%
%
The interpretation~\Imc is a \emph{model} of $C\sqsubseteq D$ if $C^\Imc\subseteq D^\Imc$ (written $\Imc\models C\sqsubseteq D$),
and it is a model of an ontology~\Omc ($\Imc\models\Omc$) if it is a model of all axioms in~\Omc.
An ontology~\Omc is \emph{consistent} if it has a model, and an axiom~$C\sqsubseteq D$ is \emph{entailed} by~\Omc (written $\Omc\models C\sqsubseteq D$) if every model of~\Omc is a model of~$C\sqsubseteq D$; in this case, we also say that $C$ is \emph{subsumed} by~$D$ w.r.t.\ \Omc.
The \emph{classification} of~\Omc is the set $\class(\Omc):=\{\tup{C,D}\mid C,D\in\sub(\Omc),\ \Omc\models C\sqsubseteq D\}$.\footnote{Often, the classification is done only for concept names in~\Omc, but we use a variant that considers all subconcepts, as it is done by the \ELbot reasoner~\Elk.}
%
The three reasoning problems of deciding consistency, checking subsumption, and computing the classification are mutually reducible in polynomial time.
Reasoning is \PTime-complete in \ELbot and \ExpTime-complete in \ALC~\cite{DBLP:books/daglib/0041477}.


\subsection{Concrete Domains.}
%
Concrete domains 
have been introduced as a means to integrate reasoning about quantitative features of objects into DLs~\cite{DBLP:conf/ijcai/BaaderH91,DBLP:conf/aiml/Lutz02,DBLP:journals/jar/BaaderR22}.
Given a set \NPred of \emph{concrete predicates} and an arity $\ar(P)\in\Nbb$ for each $P\in\NPred$, a \emph{concrete domain (CD)} $\Dmc=(\Delta^\Dmc,\cdot^\Dmc)$ over $\NPred$ consists of a set $\Delta^\Dmc$ and relations $P^\Dmc\subseteq(\Delta^\Dmc)^{\ar(P)}$ for all $P\in\NPred$.
%
We assume that \NPred always contains a nullary predicate~$\bot$, interpreted as $\bot^\Dmc:=\emptyset$, and a unary predicate~$\top$ interpreted as $\top^\Dmc:=\Delta^\Dmc$.
Given a set \NV of \emph{variables}, a \emph{constraint} $P(x_1,\dots,x_{\ar(P)})$, with $P\in\NPred$ and $x_1,\dots,x_{\ar(P)}\in\NV$, is a predicate whose argument positions are filled with variables.

\begin{example}\label{cd:ex:1}
The concrete domain \DQgr has the set \Qbb of rational numbers as domain and, in addition to $\top$ and $\bot$, the concrete predicates $x=q$, $x>q$, and $x+q=y$, for constants $q\in\Qbb$, with their natural semantics~\cite{DBLP:conf/ijcai/BaaderBL05}. For example, $(x+q=y)^{\DQgr} = \{(p,r)\in \Qbb\times\Qbb \mid p+q=r\}$.\footnote{The index $\textit{diff}$ in its name is motivated by the fact that such a predicate fixes the difference between the values of two variables.}

The concrete domain \DQlin has the same domain as \DQgr, but its predicates other than $\{\top,\bot\}$ are given by linear equations $\sum_{i=1}^na_ix_i=b$, for $a_i,b\in \Qbb$, with the natural semantics~\cite{DBLP:journals/jar/BaaderR22},
\eg the linear equation $x+y -z =0$ is interpreted as the ternary addition predicate 
$(x+y -z =0)^{\DQlin} = \{(p,q,s)\in \Qbb^3 \mid p+q=s\}$.

The expressivity of these two CDs is orthogonal: The \DQgr predicate $x>q$ cannot be expressed as a conjunction of constraints in \DQlin, whereas the \DQlin predicate $x+y=0$ cannot be expressed in \DQgr.
\qed
\end{example}
A constraint $\alpha=P(x_1,\dots,x_{\ar(P)})$ is \emph{satisfied} by an assignment $v\colon\NV\to\Delta^\Dmc$ (written $v\models\alpha$) if $\big(v(x_1),\dots,v(x_{\ar(P)})\big)\in P^\Dmc$.
An \emph{implication} is of the form $\gamma\to\delta$, where $\gamma$ is 
a conjunction and $\delta$ a disjunction of constraints; 
it is \emph{valid} if all assignments satisfying all constraints in~$\gamma$ also satisfy some constraint in~$\delta$ (written $\Dmc\models\gamma\to\delta$).
A conjunction~$\gamma$ of constraints is \emph{satisfiable} if $\gamma\to\bot$ is not valid.
The CD~\Dmc is \emph{convex} if, for every valid implication 
$\gamma\to\delta$, there is a disjunct~$\alpha$ in~$\delta$ \st 
$\gamma\to\alpha$ is valid.
It is \emph{p-admissible} if it is convex and validity of implications is decidable in polynomial time.
This condition has been introduced with the goal of obtaining tractable 
extensions of \ELbot with 
concrete~domains~\cite{DBLP:conf/ijcai/BaaderBL05}.

\begin{example}
The CDs \DQgr and \DQlin are both p-admissible, as shown in~\cite{DBLP:conf/ijcai/BaaderBL05} and~\cite{DBLP:journals/jar/BaaderR22}, respectively.
However, if we combined their predicates into a single CD, then we would lose convexity. In fact, \DQgr has the constraints $x>0$ and $x=0$. In addition, $y>0$ (of \DQgr) and $x+y=0$ (of \DQlin)
express $x<0$. Thus, the implication $x+y=0\rightarrow x>0 \vee x=0 \vee y>0$ is valid, but none of the implications $x+y=0 \rightarrow \alpha$ for $\alpha\in\{x>0,\ x=0,\ y>0\}$ is valid.
\qed
\end{example}

To integrate a concrete domain~\Dmc into description logics, the most general 
approach uses role paths $r_1\ldots r_k$ followed by a \emph{concrete feature} 
$f$ to instantiate the variables in constraints, where the $r_i$ are roles and $f$ is 
interpreted as a partial function $f^\Imc\colon\Delta^\Imc\to\Delta^\Dmc$. 
%
%
Using the concrete domain \DQlin, the concept 
$
\ex{Human}\sqcap \exists \ex{age},\ex{parent}\,\ex{age}.[2x-y = 0]
$,
{for $\ex{age}$ being a concrete feature and $\ex{parent}$ a role name},
describes humans with a parent that has twice their age.\footnote{{See~\cite{DBLP:phd/dnb/Lutz02} for syntax and semantics of concepts using role paths.}} 
However, in the presence of role paths, p-admissibility of the CD does not 
guarantee decidability of the extended DL. 
Even if we just take the ternary addition predicate of \DQlin, the extension of \ALC with it becomes undecidable~\cite{DBLP:conf/cade/BaaderR20}, and the paper~\cite{DBLP:journals/jar/BaaderR22} exhibits a p-admissible CD whose integration into \ELbot destroys decidability.
%
%
%
%
%
Therefore, in this paper we disallow role paths, which effectively restricts concrete domain constraints to the feature values of single abstract objects. 
%
Under this restriction, the integration of a p-admissible CD leaves reasoning in 
\PTime for \ELbot~\cite{DBLP:conf/ijcai/BaaderBL05} and in \ExpTime for 
\ALC~\cite{DBLP:phd/dnb/Lutz02}.\footnote{%
The result in~\cite{DBLP:phd/dnb/Lutz02} applies to p-admissible CDs \Dmc since it is easy to show that the extension of \Dmc with the negation of its predicates satisfies the required conditions.
}
Disallowing role paths also enables us to simplify the syntax by treating variables directly as concrete features.

Formally, the description logics \ELbotD and \ALCD are obtained from \ELbot and \ALC by allowing constraints~$\alpha$ from the CD~\Dmc to be used as concepts, where we employ the notation $[\alpha]$ to distinguish constraints visually from classical concepts.
Interpretations~\Imc are extended by associating to each variable $x\in\NV$ a \emph{partial} function $x^\Imc\colon\Delta^\Imc\to\Delta^\Dmc$, and defining $[\alpha]^\Imc$ as the set of all $d\in\Delta^\Imc$ for which (a)~the assignment $v^\Imc_d(x):=x^\Imc(d)$ is defined for all variables~$x$ occurring in~$\alpha$, and (b)~$v^\Imc_d\models\alpha$.
%

%


\begin{example}\label{ex:medical}
Extending the medical example from the introduction, we can state that, for a patient in the intensive care unit, the heart rate and blood pressure are monitored,
using the GCI 
$
\ex{ICUpatient}\sqsubseteq [\top(\ex{hr})] \sqcap [\top(\ex{sys})] \sqcap [\top(\ex{dia})],
$
which says that, for all elements of the concept $\ex{ICUpatient}$, the values of the variables
$\ex{hr}$, $\ex{sys}$, $\ex{dia}$ are defined. The pulse pressure $\ex{pp}$ can then be defined via
$
\ex{ICUpatient}\sqsubseteq [\ex{sys}-\ex{dia} - \ex{pp} = 0].
$
Similarly, the maximal heart rate can be defined by
$
\ex{ICUpatient}\sqsubseteq[\ex{maxHR}+\ex{age}=220].
$
All the constraints employed in these GCIs are available in \DQlin.
One might now be tempted to use the GCI 
$
\ex{ICUpatient}\sqcap ([\ex{pp} > 50]\sqcup[\ex{hr} > \ex{maxHR}]) \sqsubseteq \ex{NeedAttention}
$
to say that ICU patients whose pulse pressure is larger than  50 mmHG or whose heart rate is larger than their maximal heart rate 
need attention. However, while 
{$[\ex{pp} > 50]$ }
is a \DQgr constraint, it is not available in \DQlin, and 
{$[\ex{hr} > \ex{maxHR}]$ }
is available in neither. But we can raise an alert when the heart rate gets near the maximal one using
$[\ex{maxHR}-\ex{hr}=5]\sqsubseteq\ex{NeedAttention}$ { since it is a statement over \DQlin}.
\qed
\end{example}

\section{Combined Concrete and Abstract Reasoning}\label{sec:reasoning}

%
%
%
%

We start by showing how classification in \ELbotD can be realized by interleaving a classifier for \ELbot with a constraint solver for \Dmc. Then we describe our constraint solvers
for \DQlin and \DQgr.

\subsection{Reasoning in \ELbotD}\label{subseq:el}

The idea is that we can reduce reasoning in \ELbotD to reasoning in \ELbot by abstracting away CD constraints by new concept names, and then adding
GCIs that capture the interactions between constraints. To be more precise,
let \Dmc be a p-admissible concrete domain, \Omc an $\ELbotD$ ontology, and $\Cmc(\Omc)$ the finite set of constraints occurring in~\Omc.
We consider the ontology~$\Omc^{-\Dmc}$ that results from replacing each $\alpha\in\Cmc(\Omc)$ by a fresh concept name~$A_\alpha$.
Since \Dmc is p-admissible, the valid implications over the constraints in $\Cmc(\Omc)$ can then be fully encoded by the \ELbot ontology
\begin{align*}
  \Omc_\Dmc\coloneqq{}
  &\{A_{\alpha_1}\sqcap\dots\sqcap A_{\alpha_n}\sqsubseteq \bot \mid \alpha_1,\dots,\alpha_n\in\Cmc(\Omc),\ \Dmc\models \alpha_1\land\dots\land\alpha_n\to\bot\}\cup{} \\
  &\{A_{\alpha_1}\sqcap\dots\sqcap A_{\alpha_n}\sqsubseteq A_\beta \mid \alpha_1,\dots,\alpha_n,\beta\in\Cmc(\Omc),\ \Dmc\models \alpha_1\land\dots\land\alpha_n\to\beta\}.
\end{align*}
%
The definition of $\Omc_\Dmc$ is an
adaptation of the construction introduced in \cite[Theorem~2.14]{DBLP:phd/dnb/Lutz02} for the more general case of admissible concrete domains. 
The problem is, however, that $\Omc_\Dmc$ is usually of exponential size since it considers all subsets $\{\alpha_1,\dots,\alpha_n\}$ of $\Cmc(\Omc)$. Thus, the reasoning procedure 
for $\ELbotD$ obtained by using $\Omc^{-\Dmc}\cup\Omc_\Dmc$ as an abstraction of~\Omc would also be exponential. To avoid this blow-up, we test implications of the form $\alpha_1\land\dots\land\alpha_n\to\bot$ and
$\alpha_1\land\dots\land\alpha_n\to\beta$ for validity in \Dmc only if this information is needed, i.e., if there is a concept $C$ that is subsumed by the concept names $A_{\alpha_1},\ldots,A_{\alpha_n}$.
%
%

The resulting approach for classifying the $\ELbotD$ ontology \Omc, i.e., for computing $\class(\Omc)=\{\tup{C,D}\mid C,D\in\sub(\Omc),\ \Omc\models C\sqsubseteq D\}$ is described in 
Algorithm~\ref{alg:eld-reasoning}, where we assume that $\class(\Omc')$ is computed by a polynomial-time \ELbot classifier, such as \Elk, and that the validity of implications in \Dmc
is tested using an appropriate constraint solver for \Dmc. Since \Dmc is assumed to be p-admissible, there is a constraint solver that can perform the required tests in
polynomial time. Thus, we can show that this algorithm is sound and complete, and also runs in polynomial time.
%
%

\begin{algorithm}[tb]
  \caption{Classification algorithm for \ELbotD}\label{alg:eld-reasoning}
  $\Omc':=\Omc^{-\Dmc}$, $\Nbf:=\emptyset$\;\label{l:init}
  \While{$\Nbf\neq\class(\Omc')$}{
    $\Nbf:=\class(\Omc')$\;\label{l:update-n}
    \ForEach{$C\in \sub(\Omc^{-\Dmc})$}{
      $\Dbf_C:=\{\alpha\in\Cmc(\Omc)\mid \tup{C,A_\alpha}\in\class(\Omc')\}$\;\label{l:dc}
      \If{$\Dmc\models\bigwedge\Dbf_C\rightarrow\bot$}
        {$\Omc':=\Omc'\cup\big\{\bigsqcap_{\alpha\in\Dbf_C}A_\alpha\sqsubseteq\bot\big\}$\label{l:update-bot}}
      \Else{$\Omc':=\Omc'\cup\big\{\bigsqcap_{\alpha\in\Dbf_C}A_\alpha\sqsubseteq A_\beta\mid
        \beta\in\Cmc(\Omc), \Dmc\models\bigwedge\Dbf_C\rightarrow \beta\big\}$\label{l:update-o}}
    }
  }
  \Return{$\Nbf[A_\alpha\mapsto\alpha\mid\alpha\in\Cmc(\Omc)]$}
\end{algorithm}

\begin{restatable}{theorem}{ThmELDReasoning}\label{thm:eld-reasoning}
  Algorithm~\ref{alg:eld-reasoning} computes $\class(\Omc)$ in polynomial time.
\end{restatable}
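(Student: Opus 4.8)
The plan is to factor the argument through the exponential reference encoding $\Omc^{-\Dmc}\cup\Omc_\Dmc$, whose correctness I take as given from the adaptation of \cite[Theorem~2.14]{DBLP:phd/dnb/Lutz02} discussed above: for all $C,D\in\sub(\Omc)$ we have $\Omc\models C\sqsubseteq D$ iff $\Omc^{-\Dmc}\cup\Omc_\Dmc\models C^{-\Dmc}\sqsubseteq D^{-\Dmc}$, where $\cdot^{-\Dmc}$ is the translation replacing each $[\alpha]$ by $A_\alpha$ (convexity of \Dmc is exactly what makes the single-conclusion implications in $\Omc_\Dmc$ sufficient here). Writing $\Omc^*$ for the ontology $\Omc'$ at termination, it then suffices to show that (i) the loop terminates with $\Omc^{-\Dmc}\subseteq\Omc^*\subseteq\Omc^{-\Dmc}\cup\Omc_\Dmc$, (ii) $\class(\Omc^*)$ and $\class(\Omc^{-\Dmc}\cup\Omc_\Dmc)$ agree on all pairs over $\sub(\Omc^{-\Dmc})$, and (iii) everything runs in polynomial time. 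Projecting $\Omc^*$ back via $A_\alpha\mapsto\alpha$ then yields exactly $\class(\Omc)$.

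For soundness, I observe that every axiom the algorithm adds (lines~\ref{l:update-bot} and~\ref{l:update-o}) has the form $\bigsqcap_{\alpha\in\Dbf_C}A_\alpha\sqsubseteq\bot$ or $\bigsqcap_{\alpha\in\Dbf_C}A_\alpha\sqsubseteq A_\beta$ with $\Dbf_C\subseteq\Cmc(\Omc)$ and a correspondingly valid \Dmc-implication, so each is a member of $\Omc_\Dmc$. Hence $\Omc^{-\Dmc}\subseteq\Omc^*\subseteq\Omc^{-\Dmc}\cup\Omc_\Dmc$, and by monotonicity of entailment $\class(\Omc^*)\subseteq\class(\Omc^{-\Dmc}\cup\Omc_\Dmc)$. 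The engine of completeness is a closure property of $\Omc^*$. Since $\Omc'$ only grows, $\class(\Omc')$ is monotone, so the final pass of the while loop starts and ends with the same classification $\Nbf$ and therefore keeps $\class(\Omc')=\Nbf$ constant throughout; consequently, for every $C\in\sub(\Omc^{-\Dmc})$ the set from line~\ref{l:dc} is the fixpoint set $\Dbf_C=\{\alpha\in\Cmc(\Omc)\mid\Omc^*\models C\sqsubseteq A_\alpha\}$, and its matching axioms lie in $\Omc^*$. Using that valid implications are monotone under adding premises, I obtain the key property: whenever $\Omc^*\models C\sqsubseteq A_\alpha$ for all $\alpha$ in some $S\subseteq\Cmc(\Omc)$, then $S\subseteq\Dbf_C$, so $\Dmc\models\bigwedge S\to\beta$ implies $\Dmc\models\bigwedge\Dbf_C\to\beta$ and hence $\Omc^*\models C\sqsubseteq A_\beta$, while \Dmc-unsatisfiability of $\bigwedge S$ forces $\Omc^*\models C\sqsubseteq\bot$.

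To upgrade this closure property to completeness, I would take the canonical model \Imc of the \ELbot ontology $\Omc^*$. Every domain element of \Imc realizes some concept $C\in\sub(\Omc^{-\Dmc})$ and its set of satisfied constraint names $\{\alpha\mid\text{the element lies in }A_\alpha^\Imc\}$ equals $\Dbf_C$. By the closure property each such type is closed under valid \Dmc-implications and \Dmc-satisfiable, so any element satisfying the left-hand side $\bigsqcap_{\alpha\in S}A_\alpha$ of an axiom of $\Omc_\Dmc$ also satisfies its right-hand side. Thus \Imc already satisfies all of $\Omc_\Dmc$, i.e.\ $\Imc\models\Omc^{-\Dmc}\cup\Omc_\Dmc$. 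Since the canonical model refutes every non-entailment of $\Omc^*$, any pair over $\sub(\Omc^{-\Dmc})$ missing from $\class(\Omc^*)$ is also missing from $\class(\Omc^{-\Dmc}\cup\Omc_\Dmc)$, establishing (ii); together with the reference equivalence this gives $\Nbf[A_\alpha\mapsto\alpha]=\class(\Omc)$.

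For polynomiality, note that for each fixed $C$ the set $\Dbf_C$ can only grow across iterations, so it takes at most $\card{\Cmc(\Omc)}+1$ distinct values; hence only polynomially many distinct axioms are ever added, and $\Omc^*$ together with $\sub(\Omc^*)$ stays of polynomial size. Each non-final iteration strictly enlarges $\class(\Omc')$, which is bounded by $\card{\sub(\Omc^*)}^2$, so there are polynomially many iterations, each performing one run of the polynomial-time \ELbot classifier on a polynomial-size ontology plus polynomially many \Dmc-validity checks that are polynomial by p-admissibility. I expect the main obstacle to be the completeness direction, specifically justifying that restricting $\Omc_\Dmc$ to the \emph{realized} conjunctions $\Dbf_C$ loses no entailments: this is precisely where the canonical-model structure of \ELbot, the premise-monotonicity of valid implications, and convexity of \Dmc must be combined, whereas the termination and complexity bookkeeping is routine once the monotone growth of the $\Dbf_C$ is observed.
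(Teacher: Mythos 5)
Your soundness and termination/complexity arguments are fine and essentially match the paper's, and your fixpoint closure property of the sets $\Dbf_C$ is correct for $C\in\sub(\Omc^{-\Dmc})$. The gap is in the completeness step, at the claim that every domain element of the canonical model of $\Omc^*$ realizes some $C\in\sub(\Omc^{-\Dmc})$ with constraint type $\Dbf_C$, and hence that $\Imc\models\Omc_\Dmc$. The canonical model of $\Omc^*$ has elements for all concepts in $\sub(\Omc^*)$, and $\sub(\Omc^*)$ contains subconcepts that the algorithm itself creates: the left-hand sides $\bigsqcap_{\alpha\in\Dbf_C}A_\alpha$ of the added axioms and, since $\sqcap$ is binary, their proper sub-conjunctions. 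The types of these sub-conjunctions were never processed by the loop (which ranges only over $\sub(\Omc^{-\Dmc})$), so they need not be closed under valid \Dmc-implications. Concretely, take $\Omc=\{C\sqsubseteq[x{=}1],\ C\sqsubseteq[x{=}2],\ C\sqsubseteq[x{=}3]\}$ over \DQgr. At the fixpoint, $\Omc^*$ contains $\Omc^{-\Dmc}$, the axiom $(A_{x=1}\sqcap A_{x=2})\sqcap A_{x=3}\sqsubseteq\bot$ (under some parse of the ternary conjunction; the argument is symmetric in the choice), and trivial singleton axioms. Now $A_{x=1}\sqcap A_{x=2}\in\sub(\Omc^*)$, but $\Omc^*\not\models A_{x=1}\sqcap A_{x=2}\sqsubseteq\bot$: a one-element interpretation whose element lies in exactly $A_{x=1}$ and $A_{x=2}$ is a model of $\Omc^*$. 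Hence the canonical model of $\Omc^*$ contains an element belonging to $A_{x=1}^\Imc\cap A_{x=2}^\Imc$, and this element violates the axiom $A_{x=1}\sqcap A_{x=2}\sqsubseteq\bot\in\Omc_\Dmc$; its type $\{x{=}1,x{=}2\}$ also equals no $\Dbf_{C'}$. So $\Imc\not\models\Omc^{-\Dmc}\cup\Omc_\Dmc$, and your argument for agreement of the two classifications breaks down (the theorem survives only because such pairs are not over $\sub(\Omc^{-\Dmc})$ — but that is exactly what remains to be argued).

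The repair is precisely where the paper is more careful than your sketch: its countermodel is not the full canonical model of $\Omc^*$ but the structure whose domain is restricted to $\{C\in\sub(\Omc^{-\Dmc})\mid\tup{C,\bot}\notin\Nbf\}$, and it verifies explicitly that this restricted structure is still a model of all of $\Omc^*$ (this needs a proof, since models of \ELbot ontologies are not closed under taking substructures); on that domain your type analysis is valid. Moreover, the paper never routes through $\Omc_\Dmc$ at all: rather than invoking the equivalence $\Omc\models C\sqsubseteq D$ iff $\Omc^{-\Dmc}\cup\Omc_\Dmc\models C^{-\Dmc}\sqsubseteq D^{-\Dmc}$ — which the paper only asserts informally with a citation and never proves, so treating it as "given" defers rather than discharges the crux — it extends the restricted model by actual total feature assignments: by convexity, $\Dmc\not\models\bigwedge\Dbf_C\to\bigvee\overline{\Dbf}_C$ where $\overline{\Dbf}_C=\Cmc(\Omc)\setminus\Dbf_C$, so each element can be given values satisfying every constraint in $\Dbf_C$ and none in $\overline{\Dbf}_C$. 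This yields a genuine \ELbotD countermodel of $\Omc$ itself, making the completeness proof self-contained. In short: keep your closure property, but build the countermodel on the restricted domain and (either) prove the encoding equivalence or, as the paper does, bypass it with explicit feature assignments.
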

%
%


Next, we show how constraint solvers for \DQlin and \DQgr can be obtained.

%

\subsection{Reasoning in~\DQlin} 

To decide whether a finite conjunction of linear equations is satisfiable or whether it implies another equation,
we can use Gaussian elimination~\cite{Turner1995GaussEW}, which iteratively eliminates variables from a set of linear constraints in order to solve them. 
Each elimination step consists of a choice of constraint~$\alpha$ that is used to eliminate a variable~$x_i$ from another constraint~$\gamma$ by adding a suitable
 multiple~$q\in\Qbb$ of~$\alpha$, such that, in the sum $\gamma+q\alpha$, the coefficient~$a_i$ of~$x_i$ becomes~$0$.
This can be used to eliminate~$x_i$ from all constraints except~$\alpha$, which can then be discarded to obtain a system of constraints with one less variable.
For example, using $\alpha\colon 2x+3y=5$ to eliminate~$x$ from $\gamma\colon 4x-6y=1$ using $q=-2$ yields the new equation $-12y=-9$.

To decide whether $\alpha_1\land\dots\land\alpha_n\to\bot$  is valid in \DQlin, we must test whether the system of linear equations $\alpha_1,\dots,\alpha_n$ is unsolvable.
For this, we apply Gaussian elimination to this system. If we obtain a constraint of the form $0=b$ for non-zero $b$, then the system is unsolvable; otherwise, we obtain $0=0$
after all variables have been eliminated, which shows solvability. In case $\alpha_1\land\dots\land\alpha_n\to\bot$  is not valid, \Cref{alg:eld-reasoning} requires us to test whether 
$\alpha_1\land\dots\land\alpha_n\to\beta$ is valid for constraints $\beta$ different from $\bot$. This is the case iff the equation $\beta$ is a linear combination of the
equations $\alpha_1,\dots,\alpha_n$. For this, we can also apply Gaussian elimination steps to eliminate all variables from~$\beta$ using the equations $\alpha_1,\dots,\alpha_n$.
If this results in the constraint $0=0$, it demonstrates that $\beta$ is a linear combination; otherwise, it is not.

%
%
%

In principle, one could use standard libraries from linear algebra (\eg for Gaussian elimination or the simplex method~\cite{Turner1995GaussEW,DBLP:conf/cav/DutertreM06,DBLP:conf/tacas/MouraB08})
to implement a constraint solver for \DQlin.
We decided to create our own implementation based on Gaussian elimination, mainly for two reasons.
First, most existing numerical libraries are optimized for performance and use floating-point arithmetic. Hence, the results may be erroneous due to repeated rounding~\cite{DBLP:journals/computing/BarlowB85a}.
Second, even if rational arithmetic with arbitrary precision is used~\cite{DBLP:conf/cav/DutertreM06}, it is not trivial to extract from these tools a step-by-step account of how the verdict (valid or not) was obtained, which is a crucial requirement for extracting proofs.

\subsection{Reasoning in \DQgr}

%
The constraints of \DQgr can in principle be simulated in \emph{difference logic}, which consists of Boolean combinations of expressions of the form $x-y\le q$, and for which reasoning can be done using the Bellman-Ford algorithm for detecting negative cycles~\cite{DBLP:series/txtcs/KroeningS16,DBLP:conf/sat/ArmandoCGM04a}.
%
%
However, it is again not clear how proofs for the validity of implications can be extracted from the run of such a solver.
For this reason, we implemented a simple saturation procedure that uses the 
rules in Fig.~\ref{fig:cd2-rules} to derive implied constraints, where side 
conditions are shown in gray; these rules are similar to the rewrite rules for 
DL-Lite queries with CDs in~\cite{DBLP:conf/gcai/AlrabbaaKT19}.
\begin{figure}[tb]
  \begin{framed}
    \centering
    \def\defaultHypSeparation{\hskip 0em}

    \scalebox{.98}{
        \AXC{$x=q$}
        \AXC{$x=p$}
        \RL{\Rnotequal\textcolor{black!40!white}{$\colon q\neq p$}}
        \BIC{$\bot$}
        \DP
    }
    \hfil
    \scalebox{.98}{
    \AXC{$x+q=y$}
    \AXC{$y+p=z$}
    \RL{\Rtrans}
    \BIC{$x+(q+p)=z$}
    \DP
}
    \hfil
    \scalebox{.98}{
        \AXC{\vphantom{$x=x$}}
        \RL{\Rzero}
        \UIC{$x+0=x$}
        \DP
    }

    \smallskip

    \scalebox{.98}{
        \AXC{$x+q=y$}
        \AXC{$x+p=y$}
        \RL{\Rnotequalplus \textcolor{black!40!white}{$\colon q\neq p$}}
        \BIC{$\bot$}
        \DP
    }
    \hfil
    \scalebox{.98}{
        \AXC{$x=q$}
        \AXC{$y=p$}
        \RL{\Rdiff}
        \BIC{$x+(p-q)=y$}
        \DP
    }
    \hfil
    \scalebox{.98}{
        \AXC{$x+q=y$}
        \RL{\Rmirror}
        \UIC{$y+(-q)=x$}
        \DP
    }

    \smallskip

    \scalebox{.98}{
    \AXC{$x=q$}
    \AXC{$x>p$}
    \RL{\Rsmaller\textcolor{black!40!white}{$\colon q<p$}}
    \BIC{$\bot$}
    \DP
}
    \hfil
    \scalebox{.98}{
    \AXC{$x=q$}
    \AXC{$x+p=y$}
    \RL{\Requal}
    \BIC{$y=q+p$}
    \DP
}
\hfil
    \scalebox{.98}{
        \AXC{$x>q$}
        \AXC{$x+p=y$}
        \RL{\Rgreater}
        \BIC{$y>q+p$}
        \DP
    }
  \end{framed}
  \vskip-\bigskipamount
  \caption{Saturation rules for \DQgr constraints}
  \label{fig:cd2-rules}
\end{figure}
%
%
%
We eagerly apply the rules \Rnotequal, \Rsmaller, and \Rnotequalplus, which means that we only need to keep one constraint of the form $x+q=y$ in memory, for each pair $(x,y)$.
Since $x>q$ implies $x>p$ for all $p<q$, it similarly suffices to remember one unary constraint of the form $x=q$ or $x>q$ for each variable~$x$.
Apart from the three rules deriving~$\bot$, we can prioritize rules in the order \Rdiff, \Rmirror, \Rzero, \Rtrans, \Requal, \Rgreater, since none of the later rules can enable the applications of earlier rules to derive new constraints.
The full decision procedure is described in Algorithm~\ref{alg:cd2-reasoning}.

\begin{algorithm}[tb]
  \caption{Reasoning algorithm for \DQgr}\label{alg:cd2-reasoning}
  \KwIn{An implication $\bigwedge\Dbf\to\beta$ in \DQgr}
  \KwOut{\true iff $\DQgr\models\bigwedge\Dbf\to\beta$}
  $\Dbf':=\saturate(\Dbf)$\;
  \lIf{$\bot\in\Dbf'$ $\mathbf{or}$ $\beta\in\Dbf'$}{\Return{\true}}\label{l:bot}
  \If{$\beta$ is $x>q$}{
    \lIf{$x=p\in\Dbf'$ with $p>q$}{\Return{\true}}\label{l:constant-bound}
    \lIf{$x>p\in\Dbf'$ with $p\ge q$}{\Return{\true}}\label{l:weaken}
  }
  \Return{\false}
\end{algorithm}

\begin{restatable}{theorem}{ThmCDTwoReasoning}\label{thm:cd2-reasoning}
  \Cref{alg:cd2-reasoning} terminates in time polynomial in the size of $\bigwedge\Dbf\to\beta$ and returns \true iff $\DQgr\models\bigwedge\Dbf\to\beta$.
\end{restatable}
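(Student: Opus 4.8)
The plan is to prove three things: termination in polynomial time, soundness (if the algorithm returns \true, then $\DQgr\models\bigwedge\Dbf\to\beta$), and completeness (if the implication is valid, the algorithm returns \true). Since $\DQgr$ is convex (p-admissible), validity of $\bigwedge\Dbf\to\beta$ with $\beta\neq\bot$ reduces to a single-disjunct check, so I do not need to worry about disjunctions in the consequent; the only case split is whether $\bigwedge\Dbf$ is itself unsatisfiable ($\beta=\bot$ effectively) or whether it entails the particular constraint $\beta$.

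First I would establish that $\saturate(\Dbf)$ terminates and runs in polynomial time. Using the eager-application and prioritization observations already stated in the text, I would argue that the saturation keeps at most one constraint of the form $x+q=y$ per ordered pair $(x,y)$ and at most one unary constraint per variable $x$, so the working set has size $O(k^2)$ where $k$ is the number of variables occurring in $\bigwedge\Dbf\to\beta$. Each rule application either derives $\bot$ (halting) or adds/replaces a constraint in this bounded set; since the prioritization order \Rdiff, \Rmirror, \Rzero, \Rtrans, \Requal, \Rgreater guarantees no later rule re-enables an earlier one, the number of rule applications is polynomially bounded, and the size of the rational coefficients stays polynomial (each new constant is a sum/difference of existing ones along a derivation of bounded length). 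The post-processing tests on lines \ref{l:bot}--\ref{l:weaken} are clearly polynomial, giving the complexity claim.

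For soundness, I would verify that each saturation rule in Fig.~\ref{fig:cd2-rules} is a valid implication in $\DQgr$ — this is a routine semantic check, one rule at a time (e.g.\ \Rtrans follows from $x+q=y \land y+p=z \Rightarrow x+(q+p)=z$). By induction on the derivation, every constraint in $\Dbf'=\saturate(\Dbf)$ is entailed by $\bigwedge\Dbf$. Hence if $\bot\in\Dbf'$ the premise is unsatisfiable (so the implication is vacuously valid), if $\beta\in\Dbf'$ then $\bigwedge\Dbf\to\beta$ is valid, and the two weakening cases ($x=p$ with $p>q$, or $x>p$ with $p\ge q$, both entailing $x>q$) are likewise semantically sound. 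The remaining soundness obligation is that returning \false is justified, which is exactly the content of completeness.

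Completeness is the main obstacle. I would prove the contrapositive: if the algorithm returns \false, then there is an assignment $v$ satisfying all of $\bigwedge\Dbf$ but falsifying $\beta$. Here I would exploit the structure of a \emph{saturated, $\bot$-free} set $\Dbf'$: I would build a canonical model by fixing values greedily. Concretely, the difference constraints $x+q=y$ in $\Dbf'$ partition the variables into connected components with a fixed relative offset within each component; using \Rtrans, \Rmirror, \Rzero one shows these offsets are consistent (no $\bot$ via \Rnotequalplus). Each component either has an absolute value pinned by some $x=q$ constraint (propagated everywhere by \Requal, consistent by \Rnotequal/\Rsmaller) or is free, in which case the only remaining constraints are lower bounds $x>q$ (propagated by \Rgreater). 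For free components I can slide the whole component's value high enough to satisfy all strict lower bounds while violating the target $\beta$ whenever $\beta$ was not already derivable. The delicate point is arguing that the saturation is \emph{exhaustive enough} that an underivable $\beta$ can genuinely be falsified: for the difference and equality forms of $\beta$ the value is determined and disagreement is forced; for $\beta$ of the form $x>q$, I must show that if neither $x=p\,(p>q)$ nor $x>p\,(p\ge q)$ lies in $\Dbf'$, then $x$ can be assigned a value $\le q$ consistent with everything else — which is where the canonical-model construction must be pinned down carefully, ensuring no propagated constraint secretly forces $x$ above $q$.
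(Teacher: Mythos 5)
Your plan follows the same route as the paper's proof: polynomial termination by bounding the number of constraints each rule can produce per variable or per pair of variables, rule-by-rule soundness, and completeness by building a falsifying assignment from the saturated, $\bot$-free set $\Dbf'$, using the clique structure induced by the difference constraints, with components pinned by equalities (via \Requal) and free components. The termination and soundness parts are essentially complete. The gap is in completeness: the case you yourself single out as ``the delicate point'' --- $\beta$ of the form $x>q$ when the algorithm returns \false --- is exactly the heart of the proof, and you leave it open (``must be pinned down carefully'') instead of closing it. Your sliding heuristic (``slide the whole component's value high enough'') also points in the wrong direction for this case: to falsify $x>q$ you must hold $f(x)$ \emph{down} at or below $q$, not push it up.

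Here is the missing argument, as the paper gives it. If the algorithm returns \false and $\beta$ is $x>q$, then $\Dbf'$ contains neither $x=p$ with $p>q$ nor $x>p$ with $p\ge q$. If $x=p\in\Dbf'$ with $p\le q$, then $f(x)=p$ already falsifies $\beta$. If $x>p\in\Dbf'$ with $p<q$, set $f(x):=q$ \emph{exactly} and fix the other members of $x$'s clique by their offsets. The worry that a propagated constraint ``secretly forces $x$ above $q$'' is dispelled precisely by saturation under \Rgreater{} together with keeping only the strongest unary bound per variable: any constraint $y>p'$ with $y$ in $x$'s clique and $y+s=x\in\Dbf'$ yields, via \Rgreater, a bound $x>p'+s$, which must be subsumed by the strongest retained bound $x>p$; hence $p'+s\le p<q$, so $f(y)=q-s>p'$ and every such constraint stays satisfied. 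The same choice $f(x):=q$ works when $x$'s clique carries no unary constraint at all. Separately, your sentence that for $\beta$ of the form $x=q$ or $x+q=y$ ``the value is determined and disagreement is forced'' is only true for pinned components; for free components the value is chosen, not determined, and can hit $\beta$ by accident. This is repaired by noting that $\beta\notin\Dbf'$ forces (by \Requal, \Rdiff, \Rtrans, \Rmirror closure) the absence of any constraint $x=p$ resp.\ $x+p=y$ in $\Dbf'$, so $x$'s component is free resp.\ $x$ and $y$ lie in different cliques, and one clique can be shifted by a generic amount --- this is the sliding argument you state earlier, so here the issue is an imprecision rather than a hole; the $x>q$ case, however, is a genuine missing step.
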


\section{Proofs for \ELbotD Entailments}\label{sec:proofs}

Our goal is now to use the procedures described in Section~\ref{sec:reasoning} to obtain separate proofs
for the DL part and the CD part of an entailment, which we then want to combine
into a single proof, as illustrated in Fig.~\ref{fig:proofs}.

\begin{figure}[tb]
 \input{figure-proofs}
 \caption{(a) \ELbot proof over $\Omc'$, (b) \DQlin proof and (c)
integrated $\ELbot[\DQlin]$ proof.}\label{fig:proofs}
\end{figure}

Fig.~\ref{fig:proofs}(a) shows an example of {an \Elk-proof}, a proof generated by the $\Elk$
reasoner~\cite{DBLP:conf/dlog/KazakovKS17} for the final ontology $\Omc'\supseteq\Omc^{-\Dmc}$ from \Cref{alg:eld-reasoning}. 
The
labels $\mathsf{R}_\sqsubseteq$ and $\mathsf{R}_\sqcap^+$ indicate the rules from the internal calculus of
\Elk~\cite{DBLP:journals/jar/KazakovKS14}, and $(*)$ marks an axiom added by
\Cref{alg:eld-reasoning}, where $\alpha$ is $2x+3y=5$, $\beta$ is $4y=3$, and $\gamma$ is $4x-6y=1$. 
We now describe how to obtain the proof~(b) for the CD implication $\alpha\land\beta\to\gamma$, and
how to integrate both proofs into the \ELbotD[\DQlin] proof~(c).

\subsection{Proofs for the Concrete Domains}
\label{sec:proofs-cd}

For \DQgr, 
the saturation rules in Fig.~\ref{fig:cd2-rules} can be seen as proof steps. Thus, the algorithms in~\cite{DBLP:conf/lpar/AlrabbaaBBKK20,DBLP:conf/cade/AlrabbaaBBKK21} can easily be adapted to extract $\DQgr$ proofs.
Inferences due to Lines~\ref{l:bot},~\ref{l:constant-bound} and~\ref{l:weaken} in
Algorithm~\ref{alg:cd2-reasoning} are captured by the following additional rules:
\[
  \AXC{$\bot$}
  \RL{\Rbot}
  \UIC{$\beta$}
  \DP
  \qquad
  \AXC{$x=p$}
   \RL{\Rconstantbound \textcolor{black!40!white}{$\colon p>q$}}
  \UIC{$x>q$}
  \DP
  \qquad
  \AXC{$x>p$}
  \RL{\Rweaken \textcolor{black!40!white}{$\colon p\ge q$}}
  \UIC{$x>q$}
  \DP
\]

For \DQlin, inferences are Gaussian elimination steps that derive
{$\sigma+c\rho$} from linear constraints~$\sigma$ and~$\rho$, and we label them with $[1,c]$ 
to indicate that $\sigma$ is
multiplied by $1$ and $\rho$ by $c$.
%
This directly gives us a proof if the conclusion is~$\bot$ (or, equivalently, $0=b$ for non-zero~$b$).
However, proofs for implications $\bigwedge\Dbf\to\gamma$ need to be
treated differently. %
The Gaussian method would use 
\Dbf to eliminate the variables from~$\gamma$ to show that $\gamma$
is a linear combination of~\Dbf, 
and would yield a rather uninformative
proof with final conclusion $0=0$.
To obtain a proof with
$\gamma$
as conclusion, we reverse the proof direction by recursively applying the
following transformation starting from an inference step that has $\gamma$ as a premise:

\begin{equation}
  \AXC{$\sigma$}
  \AXC{$\rho$}
  \RL{\scriptsize$[1,c]$}
  \BIC{$\tau$}
  \DP
  \qquad\leadsto\qquad
  \AXC{$\rho$}
  \AXC{$\tau$}
  \RL{\scriptsize$[-c,1]$}
  \BIC{$\sigma$}
  \DP
  \label{transf}
  \tag{$\dagger$}
\end{equation}
Then we transform the next inference to obtain an inference that has~$\tau$ as the conclusion, and continue this process until $0=0$ becomes a leaf, which we then remove from the proof.
%

In our example, we would start with the following \enquote{proof} for 
$\Dmc\models\alpha\land\beta\to\gamma$:
\[
  \AXC{$4x-6y=1$}
  \AXC{$2x+3y=5$}
  \RL{\scriptsize$[1,-2]$}
  \BIC{$-12y=-9$}
  \AXC{$4y=3$}
  \RL{\scriptsize$[1,3]$}
  \BIC{$0=0$}
  \DP
\]
After applying two transformation steps~\eqref{transf}, we obtain the proof in Fig.~\ref{fig:proofs}(b).
\subsection{Combining the Proofs}\label{sec:combining-proofs}

It remains to integrate the concrete domain proofs into the DL proof over
$\Omc^{-\Dmc}$.
As a consequence
of \Cref{alg:eld-reasoning}, in~Fig.~\ref{fig:proofs}(a), the introduced concept 
names
$A_\alpha$, $A_\beta$, $A_\gamma$ occur in axioms with the same left-hand side $C$.
The idea is to add this \emph{DL context}~$C$ to every step
of the CD proof~(b) to obtain the \ELbotD-proof~(c).
This proof replaces the applications of
$\mathsf{R}_\sqcap^+$ 
and $\mathsf{R}_\sqsubseteq$
in the original DL proof~(a), and both (a) and (c) have essentially the same leafs and conclusion, except that
the auxiliary concept names~$A_\alpha,A_\beta,A_\gamma$ were replaced by the original constraints and the auxiliary axiom~$(*)$ was eliminated.
%
%
In general, such proofs can be obtained by simple post-processing of proofs
obtained separately from the DL and CD reasoning components, and we conjecture
that the integrated proof~(c) is easier to understand in practice than the
separate proofs~(a) and~(b), since the connection between
the DL and CD contexts is shown in all steps.

\begin{restatable}{lemma}{LemTransformation}\label{lem:transformation}
  Let $\Omc'$ be the final ontology computed in \Cref{alg:eld-reasoning}. Given an \Elk-proof~$\p'$ for $\Omc'\models C^{-\Dmc}\sqsubseteq D^{-\Dmc}$ and proofs for all \Dmc-implications $\alpha_1\land\dots\land\alpha_n\to\beta$ used in~$\p'$, we can construct in polynomial time an $\ELbotD$-proof for $\Omc\models C\sqsubseteq D$.
\end{restatable}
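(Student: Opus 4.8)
The plan is to transform the given \Elk-proof $\p'$ over $\Omc'$ into an $\ELbotD$-proof over $\Omc$ in two stages: first undo the concept-name abstraction, then fill the resulting gaps with the supplied concrete-domain proofs, each lifted into a suitable DL context. Throughout, I view $\p'$ as a finite tree whose inner nodes are \Elk inference steps, whose leaves are axioms of $\Omc'$ (namely axioms of $\Omc^{-\Dmc}$, or auxiliary axioms added in Lines~\ref{l:update-bot} and~\ref{l:update-o}), and whose root is $C^{-\Dmc}\sqsubseteq D^{-\Dmc}$.

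\emph{First stage.} I apply the de-abstraction substitution $\theta$ that replaces every auxiliary concept name $A_\alpha$ by the constraint concept $[\alpha]$, to every sequent occurring in $\p'$. Each \Elk rule instance remains a valid $\ELbotD$ rule instance under $\theta$: the \Elk rules are schematic in their concepts and only decompose the constructors $\sqcap$ and $\exists r$, while $[\alpha]$ is inert for these rules, so replacing $A_\alpha$ by $[\alpha]$ never invalidates a step. The leaves split into two kinds. A leaf from $\Omc^{-\Dmc}$ is mapped by $\theta$ back to the corresponding axiom of $\Omc$, since $\Omc^{-\Dmc}$ is exactly $\Omc$ with each $\alpha\in\Cmc(\Omc)$ replaced by $A_\alpha$, and $\theta$ inverts this. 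A leaf that is an auxiliary axiom becomes a \emph{CD pseudo-axiom} $[\alpha_1]\sqcap\dots\sqcap[\alpha_n]\sqsubseteq[\beta]$ (or $\dots\sqsubseteq\bot$); these are the only leaves not justified by $\Omc$, and by construction each corresponds to a valid \Dmc-implication $\alpha_1\land\dots\land\alpha_n\to\beta$ for which we are given a CD proof. The root becomes $C\sqsubseteq D$, as desired.

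\emph{Second stage.} I eliminate each CD pseudo-axiom leaf $[\alpha_1]\sqcap\dots\sqcap[\alpha_n]\sqsubseteq[\beta]$ by lifting the supplied CD proof $\pi$ of $\alpha_1\land\dots\land\alpha_n\to\beta$ — a tree of the constraint-inference steps of Section~\ref{sec:proofs-cd} with leaves among $\alpha_1,\dots,\alpha_n$ and conclusion $\beta$ — into the DL context $X:=[\alpha_1]\sqcap\dots\sqcap[\alpha_n]$ by prefixing ``$X\sqsubseteq$'' to every constraint: a CD step with premises $\sigma,\rho$ and conclusion $\tau$ becomes the integrated step from $X\sqsubseteq[\sigma]$ and $X\sqsubseteq[\rho]$ to $X\sqsubseteq[\tau]$, and similarly for the unary and ternary rules. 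The leaves $X\sqsubseteq[\alpha_i]$ are trivial conjunction-elimination subsumptions, so the lifted tree $\pi_X$ is a complete $\ELbotD$-derivation with no open leaves whose root is exactly the pseudo-axiom $X\sqsubseteq[\beta]$ (the $\bot$ case is identical). Splicing $\pi_X$ in place of the pseudo-axiom leaf leaves the conclusion at that position unchanged, so the surrounding tree stays a valid chain of rule applications; doing this for every pseudo-axiom leaf yields a tree $\p$ all of whose leaves are $\Omc$-axioms or trivial subsumptions and all of whose inner steps are sound $\ELbotD$ rules, hence an $\ELbotD$-proof of $\Omc\models C\sqsubseteq D$. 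This purely local replacement (context $X$ the conjunction itself) avoids any global analysis of how the $A_\alpha$ are used; the example's proof~(c) is the optimized variant that reuses the external context $C$ and drops the surrounding $\mathsf{R}_\sqcap^+$ and $\mathsf{R}_\sqsubseteq$ steps. For the complexity bound, $\theta$ is a single linear pass over $\p'$, there are at most $|\p'|$ pseudo-axiom leaves, and each is replaced by a copy of $\pi_X$ of size linear in the corresponding input CD proof plus $n$ trivial leaves, so $\p$ is built in time polynomial in $|\p'|$ and the total size of the given CD proofs.

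The step I expect to carry the real weight is the soundness of the context-lifting in the second stage, i.e.\ that prefixing ``$X\sqsubseteq$'' to each step of a CD proof turns it into a sequence of sound $\ELbotD$ inferences; this is precisely where the correctness of the combined calculus lives, and it must be checked rule by rule against the semantics of $[\cdot]$. The only other point needing care is the substitution-closure of the \Elk calculus in the first stage — that no \Elk rule relies on $A_\alpha$ being a genuine concept name rather than a constraint — which is immediate once one observes that $[\alpha]$ is inert under the constructor-decomposing \Elk rules.
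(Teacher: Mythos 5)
Your proof is correct and uses the paper's central device---lifting a \Dmc-proof into a DL context by prefixing $X\sqsubseteq{}$ to every constraint (the paper's $\p_C$)---but it integrates that device differently. The paper performs a case analysis on how an auxiliary axiom $A_{\alpha_1}\sqcap\dots\sqcap A_{\alpha_n}\sqsubseteq A_\beta$ is consumed by \Elk's rules: when it is the ontology-side premise of an $\mathsf{R}_\sqsubseteq$ step whose first premise is $C\sqsubseteq A_{\alpha_1}\sqcap\dots\sqcap A_{\alpha_n}$, the CD proof is lifted with the \emph{external} context $C$ and its leaves are stitched to that premise via $\mathsf{R}_\sqcap^-$ steps (this is what yields the clean proof of Fig.~\ref{fig:proofs}(c)); only when the axiom is used under $\mathsf{R}_\exists$, or as a left premise of $\mathsf{R}_\sqsubseteq$ or $\mathsf{R}_\sqcap^+$, does the paper fall back to your uniform choice, namely context $X=A_{\alpha_1}\sqcap\dots\sqcap A_{\alpha_n}$ with tautological leaves $X\sqsubseteq A_{\alpha_i}$. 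Your two-stage substitute-then-splice argument applies that fallback everywhere, which buys uniformity, locality, and an immediate polynomial bound---and is in fact exactly the strategy the paper itself adopts for \ALCD---at the cost of the readability that motivated the case analysis: the paper explicitly remarks that the conjunction context clutters proofs when $n$ is large, and explanation quality is the point of the lemma. One issue both you and the paper leave implicit: soundness of each lifted step under the $[\cdot]$-semantics with \emph{partial} feature maps requires that a step's conclusion mention only variables that are defined for all elements of the context; for instance, \Rzero applied to a variable not occurring in the $\alpha_i$ would lift to an unsound inference. You correctly flag this as the load-bearing check, but, like the paper, you do not carry it out rule by rule.
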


\section{Generating Proofs for \ALCD}\label{sec:alc}
\newcommand{\quant}{\mathsf{Q}}

For \ALCD, a black-box algorithm as for \ELbotD is not feasible, even though we consider only p-admissible concrete domains 
and 
no role paths. The intuitive reason is that \ALC itself is not convex, and we cannot simply 
use the classification result to determine which implications $\alpha_1\wedge\ldots\wedge\alpha_n\rightarrow\beta$ 
in~\Dmc are relevant. On the other hand, adding 
all valid implications is not practical, as there can be exponentially many.
We thus need a glass-box approach, \ie a modified \ALC reasoning procedure that determines the relevant CD implications on-demand.

Moreover, to obtain proofs for \ALCD, we need a reasoning procedure that derives new axioms from old ones, and thus
classical tableau methods~\cite{ALC_TABLEAUX,HYPER_TABLEAUX} are not suited. 
However, existing consequence-based classification
methods for \ALC~\cite{DBLP:conf/ijcai/SimancikKH11} use complicated calculi that are not needed for our purposes. 
Instead, we use a modified version of a calculus from~\cite{FORGETTING_ALC}, which uses only three inference rules, but 
performs double exponentially many inferences in the worst case. Our modification ensures that we perform at most 
exponentially many inferences, and are thus worst-case optimal for the \ExpTime-complete \ALCD. 
%

\subsection{A Simple Resolution Calculus for \ALC}

The calculus represents GCIs $\top\sqsubseteq L_1\sqcup\dots\sqcup L_n$ as \emph{clauses} of the form 
\[
  L_1\sqcup\ldots\sqcup L_n \qquad  L_i\ ::=\ A\ \mid\ \neg A\ \mid\ \exists r.D\ \mid\ \forall r.D
\]
where $n\geq 0$,
$A,D\in\NC$ and $r\in\NR$. 
To decide 
$\Omc\models A\sqsubseteq B$, we normalize~$\Omc$ into a set of clauses, 
introducing fresh concept names for concepts under 
role restrictions, and add two special clauses $A_\text{LHS}\sqcup A$, $A_\text{RHS}\sqcup\neg B$, with fresh concept names~$A_\text{LHS}$
and~$A_\text{RHS}$. The latter are  used to track relevant inferences for 
constructing the final proof, for which we transform all clauses back 
into GCIs. 

\begin{figure}[t!]
 \begin{framed}
\scalebox{.85}{
  \textbf{A1}:
  $\dfrac{C_1\sqcup A,\quad C_2\sqcup\neg A}
         {C_1\sqcup C_2}$
    
    \qquad
  
  \textbf{r1}:
  $\dfrac{C\sqcup \exists r.D,\quad
          C_1\sqcup\forall r.D_1,\ \ldots,\
          C_n\sqcup\forall r.D_n,\quad \neg D_1\sqcup\ldots\sqcup \neg D_n}
         {C\sqcup C_1\sqcup\ldots\sqcup C_n}$
}

  \smallskip

\scalebox{.85}{
  \textbf{r2}:
  $\dfrac{C\sqcup \exists r.D,\quad
          C_1\sqcup\forall r.D_1,\ \ldots,\
          C_n\sqcup\forall r.D_n,\quad
          \neg D\sqcup \neg D_1\sqcup\ldots\sqcup \neg D_n}
         {C\sqcup C_1\sqcup\ldots\sqcup C_n}$
}

 \end{framed}
 \caption{Inference rules for \ALC clauses.}
 \label{fig:alc-calculus}
\end{figure}

Our inference rules are shown in Fig.~\ref{fig:alc-calculus}. 
\textbf{A1} is the standard resolution rule 
from first-order logic, which is responsible for direct inferences on concept names. The rules~\textbf{r1} and~\textbf{r2} perform 
inferences on role restrictions. They consider an existential role restriction~$\exists r.D$ and a (possibly empty) set
of value restrictions over~$r$, whose conjunction is unsatisfiable due to a clause over the nested concepts.
The concept~$D$ may not be relevant for this, which is why there are two rules.
Those rules are the main difference to the original calculus in~\cite{FORGETTING_ALC}, where a more expensive, incremental 
mechanism was used instead.
%
%
To transform this calculus into a practical method, we use optimizations common for resolution-based reasoning in 
first-order logic: ordered resolution, a set-of-support strategy, as well as backward and forward subsumption deletion. 
In particular, our set-of-support strategy starts with a set of \emph{support clauses} containing only the clauses 
with $A_\text{LHS}$ and $A_\text{RHS}$. Inferences are always performed with at least one clause from this set, and the conclusion 
becomes a new support clause. If a support clause contains a literal $\exists r.D$/$\forall r.D$, 
we also add all clauses containing $\neg D$ as support clauses~\cite{ALC_ABDUCTION}.


\newcommand{\AlgALC}{$\textsf{Alg}_2$\xspace}
\newcommand{\AlgALCD}{$\textsf{Alg}_3$\xspace}

\subsection{Incorporating the Concrete Domain and Creating the Proof}

To incorporate concrete domains, we again work on the translation
$\Omc^{-\Dmc}$ replacing each constraint~$\alpha$ with~$A_\alpha$. In \ALCD, constraints can also
occur in negated form, which means that we can have literals $\neg A_\alpha$
expressing the negation of a constraint. 
We keep track of the set $\textbf{D}$ of concrete domain constraints $\alpha$ for which 
$A_\alpha$ occurs positively in a support clause.
We then use the proof procedure
for~\Dmc (see Section~\ref{sec:proofs-cd}) to generate all implications of the 
form $\alpha_1\wedge\ldots\wedge \alpha_n\rightarrow \beta$, where $\{\alpha_1$, $\ldots$, $\alpha_n\}\subseteq\mathbf{D}$ is subset-minimal, 
for which we add the corresponding clauses
$\neg A_{\alpha_1}\sqcup\ldots\sqcup\neg A_{\alpha_n}\sqcup A_\beta$. If $\beta=\bot$, we instead add   
$\neg A_{\alpha_1}\sqcup\ldots\sqcup\neg A_{\alpha_n}$. 
%
%
%

\begin{restatable}{theorem}{ThmALCDReasoning}\label{thm:alcd-reasoning}
  Let $\Omc$ be an \ALCD ontology and $\Nbf$ the
normalization of $\Omc^{-\Dmc}$. 
Then our method 
takes at most exponential time, and it derives
$A_\text{LHS}\sqcup A_\text{RHS}$ or a subclause from~\Nbf iff $\Omc\models C\sqsubseteq D$. 
\end{restatable}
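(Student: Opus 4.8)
The plan is to prove the three assertions separately — the exponential time bound, soundness (derivation $\Rightarrow$ entailment), and completeness (entailment $\Rightarrow$ derivation) — with completeness being by far the hardest. For the complexity bound I would first observe that normalizing $\Omc^{-\Dmc}$ into $\Nbf$ runs in polynomial time and introduces only polynomially many fresh concept names, so the literal signature (the atoms $A,\neg A,\exists r.D,\forall r.D$ over the concept names of $\Nbf$ and the roles of $\Omc$) has polynomial size $\ell$. Every clause is a set of such literals, so there are at most $2^{\ell}$ distinct clauses; since saturation only ever adds clauses (forward and backward subsumption merely delete redundant ones), the process stops after at most exponentially many inferences. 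The on-demand concrete-domain clauses are generated from the subset-minimal subsets of $\mathbf{D}$, of which there are at most exponentially many, and by p-admissibility each test $\Dmc\models\alpha_1\wedge\dots\wedge\alpha_n\to\beta$ (resp.\ $\to\bot$) runs in polynomial time (\Cref{thm:cd2-reasoning} for \DQgr, Gaussian elimination for \DQlin). Hence the whole procedure is exponential, which is optimal as \ALCD is already \ExpTime-hard.

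For soundness I would check that each rule preserves models. \textbf{A1} is ordinary resolution; \textbf{r1} and \textbf{r2} follow from the standard argument that an element avoiding $C\sqcup C_1\sqcup\dots\sqcup C_n$ would, via $\exists r.D$ and the $\forall r.D_i$, possess an $r$-successor contradicting $\neg D_1\sqcup\dots\sqcup\neg D_n$ (resp.\ $\neg D\sqcup\neg D_1\sqcup\dots$). Each generated clause $\neg A_{\alpha_1}\sqcup\dots\sqcup\neg A_{\alpha_n}\sqcup A_\beta$ back-translates under $A_\alpha\mapsto[\alpha]$ to the valid inclusion $[\alpha_1]\sqcap\dots\sqcap[\alpha_n]\sqsubseteq[\beta]$, since $\Dmc\models\alpha_1\wedge\dots\wedge\alpha_n\to\beta$. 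Thus every derived clause is entailed. If a countermodel $\Imc\models\Omc$ with some $d\in C^\Imc\setminus D^\Imc$ existed, I would extend it by $A_\alpha^\Imc:=[\alpha]^\Imc$ and set the guard $A_\text{LHS}^\Imc$ to the complement of the tested left-hand side and $A_\text{RHS}^\Imc$ to the tested right-hand side; this satisfies both special clauses yet excludes $d$ from $A_\text{LHS}^\Imc\cup A_\text{RHS}^\Imc$. Since $A_\text{LHS}\sqcup A_\text{RHS}$ — and a fortiori any subclause — is entailed, $d$ would have to lie in it, a contradiction. Hence deriving the goal implies $\Omc\models C\sqsubseteq D$.

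Completeness I would prove contrapositively by a canonical model construction from the saturated clause set, leaning on the completeness of the base calculus of~\cite{FORGETTING_ALC}: if the goal is not derived, the maximal consistent literal sets (types) realizable in the saturation form the domain, an $r$-successor for each demand $\exists r.D$ is provided by a compatible type (guaranteed, since otherwise \textbf{r1} or \textbf{r2} would apply), and the guards place a witness for $C\sqcap\neg D$. The delicate new ingredient is the concrete part: at an element of type $t$ I assign an actual \Dmc-valuation to the features occurring in its positive constraints $P_t:=\{\alpha\mid A_\alpha\in t\}$. Such a valuation exists because $P_t$ is \Dmc-satisfiable — otherwise a subset-minimal $\bot$-implication would have fired, contradicting consistency of $t$ — and the remaining features stay undefined, which falsifies every negated constraint not pinned by $P_t$. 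For a negated constraint $\neg A_\beta\in t$ all of whose variables are pinned, I would invoke \emph{convexity} of the p-admissible \Dmc: were $P_t$ to entail the disjunction of all such $\beta$, it would entail a single one, whose on-demand clause $\neg A_{\alpha_1}\sqcup\dots\sqcup A_\beta$ would have forced $A_\beta\in t$, contradicting $\neg A_\beta\in t$. So one valuation satisfies $P_t$ and falsifies all pinned negated constraints simultaneously, yielding a model of $\Omc$ refuting $C\sqsubseteq D$.

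The main obstacle is exactly this completeness argument, in two respects. First, one must show that the streamlined rules \textbf{r1}/\textbf{r2}, together with ordered resolution, the set-of-support strategy, and subsumption deletion, still derive every relevant clause; I would justify this via the usual redundancy criteria and by matching \textbf{r1}/\textbf{r2} against the more expensive incremental mechanism of~\cite{FORGETTING_ALC}. Second, one must show that restricting the premises of generated implications to the tracked set $\mathbf{D}$ of positively occurring constraints omits nothing relevant: here the interplay of convexity (reducing disjunctive concrete entailments to single-constraint ones) and the partiality of features (letting unpinned negated constraints be falsified for free) is precisely what makes the on-demand, single-$\beta$ clause generation complete, even though \ALC itself is non-convex.
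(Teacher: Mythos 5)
Your overall architecture — complexity by clause counting, soundness by back-translating the generated CD clauses and reinterpreting the markers, completeness by building a model of the saturated clause set and then assigning concrete feature values via convexity — is the same as the paper's, and your soundness and complexity parts are fine. The genuine gap is in the completeness step, where you demand \emph{full} correspondence between $A_\beta$ and $[\beta]$ at every element: a valuation satisfying all of $P_t=\{\alpha\mid A_\alpha\in t\}$ while falsifying \emph{every} pinned $\beta$ with $\neg A_\beta\in t$, justified by the claim that if $P_t\models\beta$ then the on-demand clause $\neg A_{\alpha_1}\sqcup\dots\sqcup\neg A_{\alpha_n}\sqcup A_\beta$ \enquote{would have forced $A_\beta\in t$}. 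That clause need not exist: the method only generates implications whose conclusion $A_\beta$ occurs \emph{negatively} somewhere in the clause set and whose premises lie in the tracked set $\mathbf{D}$ of constraints occurring positively in support clauses. A maximal type can perfectly well contain $\neg A_\beta$ for a constraint $\beta$ with no negative occurrence at all, or contain some $A_\alpha$ with $\alpha\notin\mathbf{D}$; in either case no generated clause blocks $P_t\models\beta$, your convexity argument is vacuous, and the required valuation simply does not exist (take $\Dmc\models\alpha\to\beta$ with $A_\alpha\in t$, $\neg A_\beta\in t$, and $A_\beta$ never occurring negatively). So the construction, as stated, cannot be completed.

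The repair is exactly what the paper does: weaken the requirement so that a constraint must be falsified at $d$ only when its negative occurrence is actually \emph{needed} to satisfy a clause there. Concretely, the paper's proof fixes, per element $d$, the set $\Dbf=\{D\in\Cmc(\Omc)\mid d\in A_D^\Imc\}$ and the set $\overline{\Dbf}$ of those $D$ for which some clause has maximal literal $\neg A_D$ and $d$ falsifies all its other literals; for precisely these constraints the generated clauses guarantee $\Dmc\not\models\bigwedge\Dbf\to\bot$ and $\Dmc\not\models\bigwedge\Dbf\to D$ for each $D\in\overline{\Dbf}$, and convexity then yields the valuation (constraints outside $\Dbf\cup\overline{\Dbf}$ are left unconstrained rather than forced to be falsified). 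Note also that the paper does not import refutational completeness from~\cite{FORGETTING_ALC}, whose mechanism its rules \textbf{r1}/\textbf{r2} replace; it proves completeness of the streamlined calculus directly by an ordering-guided iterative model construction (\Cref{the:alc-calculus}), and the proof of the present theorem reuses that very construction. This matters beyond modularity: in that construction an element enters $A_\alpha^\Imc$ only when forced by a clause of $\Nbf^*$ with maximal literal $A_\alpha$, which is what ties $\Dbf$ to tracked, positively occurring constraints and closes the $\mathbf{D}$-membership issue that your type-based construction leaves open.
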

%
%
\hide{
\begin{enumerate}
  \item add to $\Omc$ the GCIs $A^\dagger\sqsubseteq C^\dagger$ and $D^\dagger\sqsubseteq B^\dagger$ with fresh concept names,
  $A^\dagger,B^\dagger$, 
  \item normalize the resulting ontology into a set of TBox clauses,
  \item to which we add the ABox clauses $a: A^\dagger\sqcup A_\text{LHS}$ and $a: \neg B^\dagger\sqcup A_\text{RHS}$. 
\end{enumerate}
  The fresh names $A_\text{LHS}$ and $A_\text{RHS}$ serve as \emph{markers} to help track where the 
  left-hand side and the right-hand side the GCI to be proven is used. 
  \patrick{Modification of ordering must now be mentioned in the appendix!}
}
  Proofs generated using the calculus operate on the level of clauses.
  We transform them into proofs of
  $\Omc^{-\Dmc}\models A\sqsubseteq B$ by
  1)~adding inference steps that reflect the normalization, 
  2)~if necessary, adding an inference to produce $A_\text{LHS}\sqcup A_\text{RHS}$ from a subclause
  3)~replacing $A_\text{LHS}$ by $\neg A$ and $A_\text{RHS}$ by $B$,
  4)~replacing all 
  other introduced concept names by the complex concepts they were introduced for, and
  5)~transforming clauses into more 
  human-readable GCIs using some simple 
  rewriting rules (%
\iftr {see the appendix} \else {see~\cite{ourarxive}} \fi
  for details). In the resulting proof, the initial clauses $A\sqcup A_{LHS}$ and $\neg B\sqcup A_{RHS}$
  then correspond to the tautologies $A\sqsubseteq A$ and $B\sqsubseteq B$.
  To get a proof for $\Omc\models A\sqsubseteq B$, we use a procedure similar to the one from Section~\ref{sec:combining-proofs} to integrate concrete domain 
  proofs.
  Because the integration requires only simple structural transformations,
the complexity of computing the combined proofs is determined by the
corresponding complexities for the DL and the
concrete domain.
We can thus extend the
approaches
from~\cite{DBLP:conf/lpar/AlrabbaaBBKK20,DBLP:conf/cade/AlrabbaaBBKK21}
to obtain complexity bounds for finding proofs of small size and depth.
%

\begin{restatable}{theorem}{complexitygoodproofs}\label{thm:complexitygoodproofs}
  For $\Dmc\in\{\DQlin, \DQgr\}$, deciding the existence of a proof of at most
a given \emph{size} can be done in \NP for \ELbotD, and in \NExpTime for \ALCD.
For
 proof \emph{depth}, the corresponding problem is in \PTime for
$\ELbotD[\DQgr]$, in \NP for $\ELbotD[\DQlin]$, and in \ExpTime for $\ALCD$
(for both concrete domains).
\end{restatable}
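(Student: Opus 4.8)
The plan is to reduce the statement to the generic proof-optimization framework of \cite{DBLP:conf/lpar/AlrabbaaBBKK20,DBLP:conf/cade/AlrabbaaBBKK21}, in which proofs are derivations over a (possibly implicit) \emph{derivation structure} of admissible inferences, and in which the complexity of deciding the existence of a proof of bounded size or depth is governed by three parameters of that structure: its cardinality, whether it can be constructed explicitly within a given time bound, and whether individual inferences are checkable in polynomial time. The key enabling observation is that, by \Cref{lem:transformation} (and its \ALCD analogue sketched after \Cref{thm:alcd-reasoning}), the integration of the DL proof with the CD proofs is a polynomial-time structural rewriting that changes size and depth only by a polynomial factor; hence an optimal integrated proof corresponds to optimal component proofs, and the relevant derivation structure is the union of the DL structure (from \Elk for \ELbotD, from the resolution calculus of \Cref{fig:alc-calculus} for \ALCD) and the CD structure (from \Cref{sec:proofs-cd}). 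The task then becomes instantiating the three parameters for each combination and invoking the generic bounds.

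On the DL side, I would observe that for \ELbotD the \Elk calculus yields a derivation structure of polynomial size that is explicitly constructible in polynomial time, whereas for \ALCD the calculus of \Cref{fig:alc-calculus} yields a structure of at most exponential size, explicitly constructible in exponential time (exactly the bound established in \Cref{thm:alcd-reasoning}). On the CD side, the two concrete domains behave differently, and this is the crux of the depth bounds. For \DQgr, eager saturation leaves only polynomially many derivable constraints and inferences among them, so the CD derivation structure is again explicitly constructible in polynomial time. For \DQlin, by contrast, the intermediate linear combinations produced by Gaussian elimination are \emph{not} bounded a priori, so the CD structure cannot be enumerated in polynomial time; what saves us is that a minimal \DQlin sub-proof of a single implication combines its $k$ premises using $k-1$ steps and can be arranged as a balanced tree of depth $\lceil\log_2 k\rceil$, so both its size and depth are polynomial, while standard bounds on Gaussian elimination keep all coefficient bit-sizes polynomial. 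Consequently a depth- or size-bounded \DQlin proof can be \emph{guessed} and verified in polynomial time even though the ambient structure is not explicitly available.

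Assembling these ingredients gives the claimed bounds. For \emph{size}, an optimal proof within the given bound has polynomial size for \ELbotD and exponential size for \ALCD; guessing it and verifying each inference (a polynomial check on both the DL and the CD side) yields \NP and \NExpTime, respectively. For \emph{depth}, I would distinguish the cases by explicitness of the derivation structure: for $\ELbotD[\DQgr]$ both the DL and CD structures are explicitly constructible in polynomial time, so the generic fixpoint computation of minimal derivation depth from \cite{DBLP:conf/lpar/AlrabbaaBBKK20,DBLP:conf/cade/AlrabbaaBBKK21} runs in \PTime; for $\ELbotD[\DQlin]$ the CD structure is not explicitly constructible, but a depth-bounded proof is still of polynomial size and can be guessed and verified, giving \NP; and for \ALCD the exponential structure is explicitly constructible in exponential time, so the depth algorithm, being polynomial in the size of the structure, runs in \ExpTime for both concrete domains (the \DQlin guessing being absorbed into the exponential budget).

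The hard part will be the \DQlin analysis underlying the \DQgr/\DQlin gap for depth: I must argue rigorously that restricting attention to polynomial-size sub-proofs with polynomially bounded coefficients loses no optimal proof, so that the guess-and-verify argument is complete, while explaining why, unlike \DQgr, the \DQlin structure does not admit the polynomial explicit construction that the deterministic \PTime depth algorithm requires. A secondary obstacle is verifying that \Cref{lem:transformation} and the \ALCD proof integration genuinely preserve the size and depth measures up to the polynomial factors claimed, so that the reduction to component proofs is faithful for both quality measures and does not, for instance, inflate depth superpolynomially when the DL context is threaded through every CD step.
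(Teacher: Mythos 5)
Your proposal is correct and follows essentially the same route as the paper: instantiate the generic derivation-structure framework of \cite{DBLP:conf/lpar/AlrabbaaBBKK20,DBLP:conf/cade/AlrabbaaBBKK21} via the integration of \Cref{lem:transformation}, note that the \DQgr structure is polynomial while the \DQlin structure is only exponential (forcing guess-and-verify of a polynomial-size Gaussian sub-proof for $\ELbotD[\DQlin]$, hence \NP rather than \PTime for depth), and absorb everything into the exponential structure for \ALCD. The only cosmetic difference is your balanced-tree depth bound for \DQlin sub-proofs, which is unnecessary (the paper only needs that polynomially many elimination steps suffice, guessed as an elimination order plus constraint choices).
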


   \hide{
  one of the following four clauses can be derived from the resulting clause set $\Nbf$: 
  \begin{enumerate}
    \item $a: A_\text{LHS}\sqcup A_\text{RHS}$, 
    \item $a: A_\text{LHS}$ (if $\Omc\models C^\dagger\sqsubseteq\bot$),
    \item $a: A_\text{RHS}$ (if $\Omc\models \top\sqsubseteq D^\dagger$), or 
    \item $x: \bot$ (if $\Omc$ is inconsistent). 
  \end{enumerate}
By tracking the inferences used when saturating $\textbf{N}$, 
we obtain a derivation structure $\R_{cl}$ using clauses instead of GCIs. 
We transform this into a more human-readable derivation structure $\R(\Omc,C^\dagger\sqsubseteq D^\dagger)$ 
over $\ALC$ axioms as follows.
\begin{enumerate}
  \item \textbf{Definer elimination} We replace all names introduced when normalizing the ontology (\eg to replace a complex concept under a role restriction) by the complex concepts they represent.
  \item \textbf{Denormalization} We \emph{denormalize} every clause $t:C$ into
a GCI $\top\sqsubseteq C$, and then exhaustively apply the rules
$C_1\sqsubseteq\neg A\sqcup C_2\Rightarrow C_1\sqcap A\sqsubseteq C_2$,
  $C_1\sqsubseteq\exists r.\neg C_2\sqcup C_3\Rightarrow C_1\sqcap\forall
r.C_2\sqsubseteq C_3$, and $C_1\sqsubseteq\forall r.\neg C_2\sqcup
C_3\Rightarrow
  C_1\sqcup\exists r.C_2\sqsubseteq C_3$,
with $\top$ representing the empty
  conjunction and $\bot$ the empty disjunction.
  \patrick{not readable like this}
  \item \textbf{Marker elimination} The transformed ABox clauses are now GCIs recognizable by the markers $A_\text{LHS}$ and $A_\text{RHS}$. We eliminate the markers by replacing every GCI $C\sqsubseteq D\sqcup A_\text{LHS}$ by $C^\dagger\sqcap C\sqsubseteq D$, and then every GCI $C\sqsubseteq D\sqcup A_\text{RHS}$ by $C\sqsubseteq D\sqcup D^\dagger$. We furthermore replace~$A^\dagger$ by~$C^\dagger$ and~$B^\dagger$ by~$D^\dagger$. This way, the original ABox clauses 
  $a: A^\dagger\sqcup A_\text{LHS}$ and $a: \neg B^\dagger\sqcup A_\text{RHS}$ ultimately get replaced by the tautologies $C^\dagger\sqsubseteq C^\dagger$ and $D^\dagger\sqsubseteq D^\dagger$, which will serve as leafs in a proof for $C^\dagger\sqsubseteq D^\dagger$.
  \item \textbf{TBox linking} Leafs in the derivation structure that are not tautologies may still need to be linked to the axioms occurring in the TBox. When normalizing the ontology, we keep track of how axioms were transformed, and add the corresponding inferences into the derivation structure as \emph{normalization inferences}. 
\end{enumerate}
Finally, we can extract admissible proofs from the derivation structure using existing algorithms~\cite{DBLP:conf/lpar/AlrabbaaBBKK20,DBLP:conf/cade/AlrabbaaBBKK21,DBLP:conf/cade/AlrabbaaBBDKM22}.
   
\patrick{The last sentence should be adapted to link to how we generate the proofs for \ELbotD }
   }



\section{Implementation and Experiments}

We implemented the algorithms described above and evaluated their performance and the produced proofs on the self-created benchmarks \emph{Diet}, \emph{Artificial}, \emph{D-Sbj} and \emph{D-Obj}, each of which consists of multiple instances scaling from small to medium-sized ontologies. The latter two benchmarks are formulated in $\ELbotD[\DQgr]$, the rest in $\ELbotD[\DQlin]$.
Our tool is written using Java~8 and Scala. We used \Elk~0.5, \Lethe~0.85 and OWL 
API~4. The experiments were performed on 
Debian~Linux~10 (24 Intel Xeon~E5-2640 CPUs, 2.50GHz) with 25~GB 
maximum heap size and a timeout of~3~minutes for each task.  
Fig.~\ref{fig:experiments} shows the runtimes of the approaches for \ELbotD 
from Sections~\ref{sec:reasoning} and~\ref{sec:proofs} for reasoning and 
explanation depending on the \emph{problem size}, which counts all 
occurrences of concept names, role names, and features in the ontology.
%
%
%
A more detailed description of the benchmarks and results can be found in \iftr {the appendix.} \else {\cite{ourarxive}.} \fi

\begin{figure}[tb]
\begin{minipage}{0.45\textwidth}
\centering
\includegraphics[width=\textwidth]{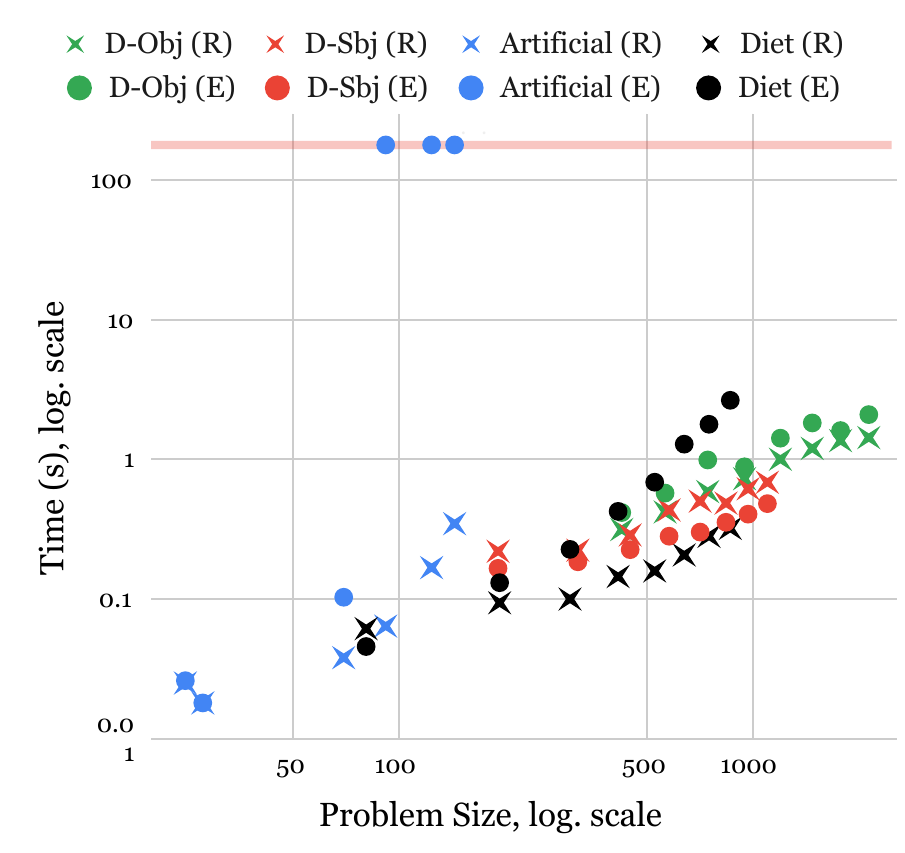}
\caption{\ELbotD: time for reasoning (R) and explanation (E) \vs problem size}
\label{fig:experiments}
\end{minipage}
\hfill
\begin{minipage}{0.45\textwidth}
\centering
\includegraphics[width=\textwidth]{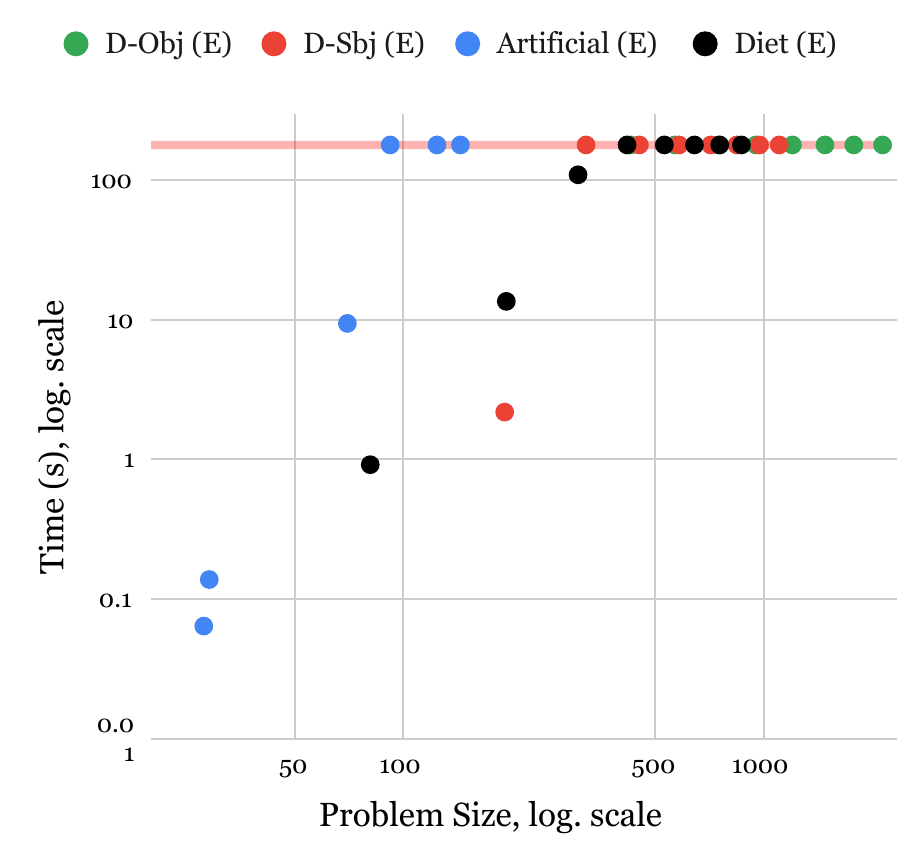}
\caption{\ALCD: total reasoning and explanation time \vs problem size}
\label{fig:experiments-alc}
\end{minipage}
\end{figure}

We observe that pure reasoning time (crosses in 
Fig.~\ref{fig:experiments}) scales well~\wrt problem size.
Producing proofs was generally more costly than reasoning, but the times were 
mostly reasonable.
%
%
However, there are several \emph{Artificial} instances for which the proof 
construction times out (blue dots). This is due to the nondeterministic choices of 
which linear constraints to use to eliminate the next variable, which we resolve 
using the Dijkstra-like algorithm described 
in~\cite{DBLP:conf/cade/AlrabbaaBBKK21},~which results in an exponential 
runtime in the worst case.
%
%
Another downside is that~some proofs were very large ($>2000$ inference steps 
in \emph{D-Obj}). However, we designed our benchmarks specifically to 
challenge the CD reasoning and proof generation capabilities (in particular, 
nearly all constraints in each ontology are necessary to entail the target axiom), 
and these results may improve for realistic ontologies.

Further analysis revealed that the reasoning times were often largely due to the 
calls to \Elk (ranging from $23\%$ in \emph{Diet} to $75\%$ in 
\emph{Artificial}), which shows that the CD reasoning does not add a huge 
overhead, unless the number of variables per constraint grows very large (\eg up 
to~$88$ in \emph{Diet}).
In comparison to the incremental use of \Elk as a black-box reasoner, the 
hypothetical \enquote{ideal} case of calling \Elk only once on the final saturated 
ontology~$\Omc'$ would not save a lot of time (average gain ranging from 
$42\%$ in \emph{Diet} to $14\%$ in \emph{D-Obj}), which shows that the 
incremental nature of our approach is also not a bottleneck.

Fig.~\ref{fig:experiments-alc} shows the runtime of the \ALCD calculus from Section~\ref{sec:alc}.
As expected, it performs worse than the dedicated \ELbotD algorithms.
In particular, currently there is a bottleneck for the \DQgr benchmarks (\emph{D-Sbj} and \emph{D-Obj}) that is due an inefficiency in the computation of the relevant CD implications $\alpha_1\land\dots\land\alpha_n\to\beta$.
In order to evaluate the increased expressivity supported by the \ALCD reasoner, 
we have also incorporated axioms with negation and universal restrictions into the 
\emph{Artificial} benchmark. Currently, however, the reasoner can solve only the 
smallest such instance before reaching the timeout.

We also compared our CD reasoning algorithms with Z3~\cite{DBLP:conf/tacas/MouraB08}, which supports linear arithmetic (for \DQlin) and difference logic (for \DQgr).
Ignoring the overhead stemming from the interface between Java and C++, the runtime of both approaches was generally in the same range, but our algorithms were faster on many CD reasoning problems.
This may be due to the fact that, although our algorithms for \DQlin and \DQgr 
are not optimized very much, they are nevertheless tailored towards very 
specific convex fragments: linear arithmetic with only~$=$, and difference logic 
with only $x+q=y$ and $x>q$, respectively.

\section{Conclusion}

We have shown that it is feasible to support p-admissible concrete domains in DL
reasoning algorithms, and even to produce integrated proofs for explaining 
consequences in the DLs \ELbotD and \ALCD, for the p-admissible concrete 
domains \DQlin and \DQgr.
In this work, we have restricted our attention to ontologies containing only GCIs (i.e., TBoxes) and to classification as the main reasoning problem. However, the extension of our methods to data and reasoning about individuals, \eg $\ex{fred}:\ex{ICUpatient}\sqcap[\ex{hr}=90]$,
encoded in so-called \emph{ABoxes}~\cite{DBLP:books/daglib/0041477}, is straightforward.
Likewise, the approach for computing \ELbotD proofs can be generalized to use other reasoning calculi for \ELbot 
instead of the one employed by \Elk, which makes very small proof steps and 
thus generates rather large proofs.

One major problem with using proofs to explain consequences is that they may 
become quite large. This problem already occurs for pure DLs without~CDs, and 
has also shown up in some of our benchmarks in this paper.
%
One possibility to alleviate this problem is to use an interactive proof visualization 
tool like Evonne~\cite{https://doi.org/10.1111/cgf.14730},
which allows zooming into parts of the proof and hiding uninteresting or already inspected parts. Since the integrated proofs that we generate have the same shape as pure DL proofs, they can be displayed using Evonne. It would, however, be interesting to add features tailored
to CD reasoning, such as visualizing the solution space of a system of linear equations.

In~\Cref{ex:medical}, we have seen that it would be useful to have the 
constraints of \DQlin and \DQgr available in a single CD.
Such a CD \Dmc would still preserve decidability if integrated into \ALC. However, since \Dmc is no longer convex, our reasoning approach for 
\ALCD does not apply. Thus, it would also be interesting to see whether this 
approach can be extended to \emph{admissible} 
CDs~\Dmc~\cite{DBLP:conf/ijcai/BaaderH91,DBLP:phd/dnb/Lutz02},
\ie CDs that are closed under negation and for which satisfiability of sets of 
constraints is decidable.



 \paragraph{Acknowledgments}
   This work was supported by the DFG grant 389792660 as part of TRR~248
 (\url{https://perspicuous-computing.science}).
\bibliographystyle{plainurl}
\bibliography{bibs}

\iftr{
\newpage

\appendix

\section{Omitted Proofs in Sections~\ref{sec:reasoning} and~\ref{sec:proofs}}


\ThmELDReasoning*
\begin{proof}
We first observe that, in
each iteration, the while loop adds at most polynomially many axioms to $\Omc'$
(at most one for each concept $C\in\sub(\Omc^{-\Dmc})$ and constraint $\beta\in\Cmc(\Omc)$), which are of polynomial size.
Therefore, $\class(\Omc')$ can always be computed in polynomial time~\cite{DBLP:conf/ijcai/BaaderBL05} in Lines~\ref{l:update-n} and~\ref{l:dc}.
Moreover, there are at most polynomially many
iterations since we only add axioms to $\Omc'$, and thus due to the monotonicity of entailment, each set $\Dbf_C$
in Line~\ref{l:dc} monotonically increases from one iteration to the next, and is bounded by $\Cmc(\Omc)$.
Since the loop terminates once all sets $\Dbf_C$ remain the same,
there can be at most
$\lvert\sub(\Omc^{-\Dmc})\rvert \cdot\lvert\Cmc(\Omc)\rvert$ iterations of 
the while loop.

It remains to prove correctness. In each step,
$\Omc'[A_\alpha\mapsto\alpha\mid\alpha\in\Cmc(\Omc)]$ contains only axioms that are entailed
by $\Omc$, since the concept names~$A_\alpha$ replace exactly the occurrences of the concrete
constraints~$\alpha$ occurring in $\Omc$. Consequently, if the output of the
algorithm contains $\tup{C,D}$, then $\Omc\models C\sqsubseteq D$,
which means that the algorithm is sound.
It remains to show completeness,
\ie that the output contains all such tuples.
Take two concepts
$C^\dagger,D^\dagger\in\sub(\Omc)$ such that $\tup{C^\dagger, D^\dagger}$ is not returned. We construct a model $\Imc$ of $\Omc$, based
on the contents of~$\Omc'$ and $\Nbf=\class(\Omc')$ in the last iteration, such that $\Imc\not\models
C^\dagger\sqsubseteq D^\dagger$. We start with a model $\Imc'$ of the final ontology~$\Omc'$:
\begin{itemize}
 \item $\Delta^{\Imc'}:=\{C\in\sub(\Omc^{-\Dmc})\mid \tup{C,\bot}\not\in\Nbf\}$,
 \item for all $A\in\NC$, $A^{\Imc'}:=\{C\in\Delta^{\Imc'}\mid
\tup{C,A}\in\Nbf\}$,
 \item for all $r\in\NR$,
$r^{\Imc'}:=\{\tup{C,D}\in\Delta^{\Imc'}\times\Delta^{\Imc'}\mid
 \tup{C,\exists r.D}\in\Nbf\}$
\end{itemize}
Since $\class(\Omc')$ computes all entailed inclusions between
subconcepts in $\Omc'$, one can easily show that $\Imc'\models\Omc'$ \cite{DBLP:journals/jar/KazakovKS14}:
from the construction it follows by induction
on the structure of
$C,D\in\sub(\Omc^{-\Dmc})$ that $C\in D^{\Imc'}$ iff $\tup{C,D}\in\Nbf$ and
$\tup{C,\bot}\not\in\Nbf$. Let $C\sqsubseteq D\in\Omc'$ and $E\in C^{\Imc'}$.
Then, $\tup{E,C}\in\Nbf$, and since $\Omc'\models C\sqsubseteq D$, also
$\tup{E,D}\in\Nbf$, and thus $E\in D^{\Imc'}$.

We argue that $\Imc'$ can be extended to an interpretation $\Imc$ such that, for each
$\alpha\in\Cmc(\Omc)$, we have $A_\alpha^{\Imc'}=\alpha^\Imc$.
It is possible
to ensure $A_\alpha^{\Imc'}\subseteq \alpha^\Imc$, because, if $\Dbf_C$ for some domain element $C\in\Delta^{\Imc'}$ is
unsatisfiable, we would have added an axiom
$\bigsqcap_{\alpha\in\Dbf_C}A_\alpha\sqsubseteq\bot$ in Line~\ref{l:update-bot}, and
by completeness of $\class(\Omc')$, we would have added $\tup{C,\bot}$ to $\Nbf$,
and thus not included $C$ in $\Delta^{\Imc'}$.
Consequently, we can assign values to the concrete features such that all
constraints in $\Dbf_C$ are satisfied, \ie $C\in A_\alpha^{\Imc'}$ implies $C\in \alpha^\Imc$, since $A_\alpha\in\Dbf_C$ by construction. However, we have to ensure that for all
$\alpha\in\Cmc(\Omc)$, also $\alpha^\Imc\subseteq A_\alpha^{\Imc'}$ holds, because otherwise there
might be GCIs in $\Omc$ with concrete constraints on the left-hand side that
are not satisfied. Intuitively, we have to make sure that we do not accidentally
satisfy more concrete domain constraints than necessary. For a fixed~$C$, let
$\overline{\Dbf}_C:=\{\beta\in\Cmc(\Omc)\mid\tup{C,A_\beta}\not\in\Nbf\}$.
Due to Line~\ref{l:update-o}, there is no $\beta\in\overline{\Dbf}_C$ such that
$\Dmc\models\bigwedge\Dbf_C\rightarrow\beta$. By convexity of~$\Dmc$, this implies
$\Dmc\not\models\bigwedge\Dbf_C\rightarrow\bigvee\overline{\Dbf}_C$.
In other words, we can find an assignment to the
concrete features $x^\Imc(C)$ such that every constraint in $\Dbf_C$ is satisfied, and no
constraint in $\overline{\Dbf}_C$ is satisfied. Doing this for all domain elements $C\in\Delta^{\Imc'}$,
we ensure that $A_\alpha^{\Imc'}=\alpha^\Imc$ for all $\alpha\in\Cmc(\Omc)$, and thus
$\Imc\models\Omc$.

We now show that $\Imc\not\models C^\dagger\sqsubseteq D^\dagger$ by proving that $C'\in(C')^\Imc=(C^\dagger)^\Imc$ and $C'\notin(D')^\Imc=(D^\dagger)^\Imc$, where $C'$ and $D'$ result from replacing in
$C^\dagger$ and $D^\dagger$ each $\alpha\in\Cmc(\Omc)$ by $A_\alpha$.
From our assumption that $\tup{C^\dagger,D^\dagger}$ is not returned by Algorithm~\ref{alg:eld-reasoning}, we know that $\tup{C',D'}\notin\Nbf$.
Therefore, by completeness of $\class(\Omc')=\Nbf$, we cannot have $\tup{C',\bot}\in\Nbf$, and thus $C'\in\Delta^{\Imc'}$ and $C'\notin(D')^{\Imc'}=(D')^\Imc$.
Completeness of $\class(\Omc')$ also yields that $\tup{C',C'}\in\Nbf$, and thus $C'\in(C')^{\Imc'}=(C')^\Imc$.
This shows that $\Imc\not\models C^\dagger\sqsubseteq D^\dagger$ and concludes the proof.
\qed
\end{proof}

\ThmCDTwoReasoning*
\begin{proof}
    We show that each rule can produce only quadratically many new constraints in the number of variables, and hence the algorithm terminates after polynomial time.
    For \Rdiff, \Rmirror, and \Rzero, this follows from the fact that these rules can be applied at most once for each variable~$x$ or each pair of variables~$(x,y)$.
    Using \Rtrans, we can produce at most~$2$ constraints of the form $x+q=y$ for each pair $(x,y)$, since the saturation stops as soon as~\Rnotequalplus can be applied.
    At the end, \Requal and \Rgreater can also be applied only once for each pair~$(x,y)$.
  
    It is clear that each of the rules is sound.
    Hence, it remains to show that $\DQgr\not\models\bigwedge\Dbf\to\beta$ whenever the Algorithm returns \false.
    In this case, $\Dbf'$ cannot contain~$\beta$ nor~$\bot$, \ie the rules \Rnotequalplus, \Rsmaller, and \Rnotequalplus are not applicable to~$\Dbf'$.
    We construct an assignment~$f$ to show that $\bigwedge\Dbf\to\beta$ is not valid, \ie which satisfies all constraints in~\Dbf, but not~$D\beta$.
    For any $x=q\in\Dbf'$, we set $f(x):=q$.
    Consider now the remaining unsatisfied constraints of the form $x+q=y$ and $x>q$ in~$\Dbf'$, for which there cannot exist constraints $x=p$ nor $y=p$ in $\Dbf'$.
    Due to \Rmirror and \Rtrans, the directed graph~$G$ with edges $\{(x,y)\mid x+q=y\in\Dbf'\}$ consists of unconnected cliques.
    If we fix one variable $x$ in each clique~$C$ and some value $f(x)$, then the values $f(y)$ of the other variables $y\in C$ are determined by the unique constraints $x+q=y$ that must exist in~$\Dbf'$.
    Such an assignment also satisfies all constraints $y+p=z$ with $y,z\in C$ since they are implied by corresponding constraints $x+q=y$ and $x+r=z$, for which we must have $r-q=p$ due to \Rmirror, \Rtrans, and \Rnotequalplus.
    We now set $f(x)$ to an arbitrary value that only has to satisfy the constraint $x>q$ in case one exists in~$\Dbf'$, and fix the values of all other $y\in C$ accordingly.
    All constraints $y>p\in\Dbf'$ for $y\in C$ will be satisfied due to~\Rgreater.
  
    If $\beta$ is either of the form $x+q=y$ or $x=q$, it could happen that we have chosen values $f(x),f(y)$ that satisfy~$\beta$ by accident.
    However, since we assumed that $\beta\notin\Dbf'$, this is only possible if there is no constraint $x+p=y$ or $x=p$, respectively, in $\Dbf'$.
    In particular, $x$ and $y$ cannot be in the same clique.
    Thus, we can increase the value $f(x)$ by an arbitrary amount (and the values of all variables connected to~$x$ in~$G$ accordingly) in order to ensure that $\beta$ is not satisfied, while $\Dbf\subseteq\Dbf'$ remains satisfied.
  
    If $\beta$ is of the form $x>q$, then we know that $\Dbf'$ contains neither $x=p$ with $p>q$ nor $x>p$ with $p\ge q$.
    If $\Dbf'$ contains $x=p$ with $p\leq q$, then $f(x)=p$ does not satisfy~$\beta$.
    If $\Dbf'$ contains $x>p$ with $p<q$, then it is possible to choose $f(x):=q$ (and adjust the values of $x$'s clique accordingly), which also does not satisfy~$\beta$, but still satisfies~\Dbf.
    Finally, if $\Dbf'$ contains no constraint of the form $x=p$ or $x>p$, the value of $f(x)$ can be chosen arbitrarily while still satisfying~$\Dbf$, and so we can again choose $f(x):=q$ (and adjust the connected variables accordingly).
    \qed
  \end{proof}


\LemTransformation*
\begin{proof}
    Given a \Dmc-proof~\p and a concept~$C$ (the \emph{context}), we construct the \ELbotD-proof~$\p_C$ by replacing each inference in~\p as follows:
    \[
      \AXC{$\gamma_1$}
      \AXC{$\ldots$}
      \AXC{$\gamma_m$}
      \RL{$\mathsf{R}$}
      \TIC{$\delta$}
      \DP
      \quad
      \leadsto
      \quad
      \AXC{$C\sqsubseteq A_{\gamma_1}$}
      \AXC{$\ldots$}
      \AXC{$C\sqsubseteq A_{\gamma_m}$}
      \RL{$\mathsf{R}$}
      \TIC{$C\sqsubseteq A_{\delta}$}
      \DP
    \]
    In this transformation, all steps remain sound, and a \Dmc-proof of~$\beta$ becomes an \ELbot-proof of $C\sqsubseteq A_{\beta}$.
    
    To integrate this into the original \ELbot-proof~$\p'$ from \Elk, we need to analyze in which contexts the newly introduced axioms $A_{\alpha_1}\sqcap\dots\sqcap A_{\alpha_n}\sqsubseteq A_\beta$ can appear in~$\p'$ (the case with~$\bot$ instead of~$A_\beta$ is similar).
    The relevant inference rules of \Elk are
    \[
      \def\defaultHypSeparation{\hskip 0em}
      \AXC{$C\sqsubseteq D$}
      \AXC{$D\sqsubseteq E$}
      \RL{$\mathsf{R}_\sqsubseteq$}
      \BIC{$C\sqsubseteq E$}
      \DP
      \qquad
      \AXC{$C\sqsubseteq D$}
      \AXC{$C\sqsubseteq E$}
      \RL{$\mathsf{R}_\sqcap^+$}
      \BIC{$C\sqsubseteq D\sqcap E$}
      \DP
      \qquad
      \AXC{$C\sqsubseteq\exists r.D$}
      \AXC{$D\sqsubseteq E$}
      \RL{$\mathsf{R}_\exists$}
      \BIC{$C\sqsubseteq\exists r.E$}
      \DP
    \]
    with the side condition that the axiom $D\sqsubseteq E$ in~$\mathsf{R}_\sqsubseteq$ is always an element of the input ontology (here, $\Omc'$).
    We distinguish three cases in the following.
    
    \begin{itemize}
      \item In the best case, $A_{\alpha_1}\sqcap\dots\sqcap A_{\alpha_n}\sqsubseteq A_\beta$ is used in~$\p'$ in an inference step
      \begin{equation}
        \AXC{$C\sqsubseteq A_{\alpha_1}\sqcap\dots\sqcap A_{\alpha_n}$}
        \AXC{$A_{\alpha_1}\sqcap\dots\sqcap A_{\alpha_n}\sqsubseteq A_\beta$}
        \RL{$\mathsf{R}_\sqsubseteq$}
        \BIC{$C\sqsubseteq A_\beta$}
        \DP
        \label{step:r-sqsubseteq}
      \end{equation}
      as in our example proof in Fig.~\ref{fig:proofs}(a).
      Then, we replace~\eqref{step:r-sqsubseteq} by~$\p_C$, where \p is the \Dmc-proof of $\alpha_1\land\dots\land\alpha_n\to\beta$.
      The proof~$\p_C$ already has the same conclusion $C\sqsubseteq A_\beta$ as~\eqref{step:r-sqsubseteq}.
      However, since the leafs of~$\p_C$ are of the form $C\sqsubseteq A_{\alpha_i}$, in general we also need to add the inferences
      \begin{prooftree}
        \AXC{$C \sqsubseteq A_{\alpha_1}\sqcap \ldots\sqcap A_{\alpha_n}$}
        \RL{$\mathsf{R}_\sqcap^-$}
        \UIC{$C \sqsubseteq A_{\alpha_i}$}
      \end{prooftree}
      to~$\p'$ to connect~$\p_C$ to the original premise $C\sqsubseteq A_{\alpha_1}\sqcap\dots\sqcap A_{\alpha_n}$ of~\eqref{step:r-sqsubseteq}.
      In Fig.~\ref{fig:proofs}(c), this was not necessary since the nodes labeled by 
      $C\sqsubseteq A_\alpha$ and $C\sqsubseteq A_\beta$ already existed 
      in~(a), and hence we could use them directly and omit the 
      step~$\mathsf{R}_\sqcap^+$ in~(a).
    
      \item An axiom $A_{\alpha_1}\sqcap\dots\sqcap A_{\alpha_n}\sqsubseteq A_\beta$ could also be used in~$\mathsf{R}_\exists$:
      \begin{equation}
        \AXC{$C\sqsubseteq\exists r.\big(A_{\alpha_1}\sqcap\dots\sqcap A_{\alpha_n}\big)$}
        \AXC{$A_{\alpha_1}\sqcap\dots\sqcap A_{\alpha_n}\sqsubseteq A_\beta$}
        \RL{$\mathsf{R}_\exists$}
        \BIC{$C\sqsubseteq\exists r.A_\beta$}
        \DP
        \label{step:r-exists}
      \end{equation}
      In this case, we cannot use~$C$ as the context.
      Though it would be tempting to translate \Dmc-inferences 
      $\frac{\gamma_1~~\dots~~\gamma_m}{\delta}$ into $\frac{C\sqsubseteq\exists 
      r.A_{\gamma_1}~~\dots~~C\sqsubseteq \exists r.A_{\gamma_m}}{C\sqsubseteq\exists 
      r.A_\delta}$ to arrive at the conclusion $C\sqsubseteq\exists r.A_\beta$,
      such inferences would not be sound since $\emptyset\not\models\exists 
      r.A_{\gamma_1}\sqcap\dots\sqcap\exists r.A_{\gamma_m}\sqsubseteq \exists 
      r.(A_{\gamma_1}\sqcap\dots\sqcap A_{\gamma_m})$ in \ELbot.
      Hence, we simply use $D=A_{\alpha_1}\sqcap\dots\sqcap A_{\alpha_n}$ itself as 
      the context, \ie we translate the \Dmc-proof~\p of $\alpha_1\land\dots\land\alpha_n\to\beta$ into $\p_D$, which provides a proof of the second premise $D\sqsubseteq A_\beta$ of~\eqref{step:r-exists}.
      %
      The leafs of~$\p_D$ have labels of the form $A_{\alpha_1}\sqcap\dots\sqcap A_{\alpha_n}\sqsubseteq A_{\alpha_i}$, for which we introduce new inferences $\frac{}{A_{\alpha_1}\sqcap\dots\sqcap A_{\alpha_n}\sqsubseteq A_{\alpha_i}}$ since they are tautologies that require no further explanation.
      Using $D=A_{\alpha_1}\sqcap\dots\sqcap A_{\alpha_n}$ as context here is not ideal, though, since in general $n$ could be quite large, which can make the proof cluttered.
    
      \item The last case is when $A_{\alpha_1}\sqcap\dots\sqcap A_{\alpha_n}\sqsubseteq A_\beta$ appears as the first premise of~$\mathsf{R}_\sqsubseteq$ or any premise of~$\mathsf{R}_\sqcap^+$.
      In this case, the left-hand side $C=A_{\alpha_1}\sqcap\dots\sqcap A_{\alpha_n}$ is propagated to the conclusions of the form $C\sqsubseteq E$ or $C\sqsubseteq D\sqcap E$, which then potentially lead to more axioms of the form $C\sqsubseteq F$.
      However, since these axioms are not elements of~$\Omc'$, they can never appear as the second premise of~$\mathsf{R}_\sqsubseteq$, which means that we cannot find a different DL context for the \Dmc-proof~\p of $A_{\alpha_1}\sqcap\dots\sqcap A_{\alpha_n}\sqsubseteq A_\beta$, and we again have to use $C=A_{\alpha_1}\sqcap\dots\sqcap A_{\alpha_n}$ itself as the context, \ie we use~$\p_C$ to derive $A_{\alpha_1}\sqcap\dots\sqcap A_{\alpha_n}\sqsubseteq A_\beta$ directly.
      %
    
    \end{itemize}
    %
    Finally, as seen in Fig.~\ref{fig:proofs}(c), we replace all $A_\alpha$ in the resulting proof by~$\alpha$ in order to obtain an \ELbotD-proof.
    %
    \qed
    \end{proof}

\section{Omitted Details from \Cref{sec:alc}}

\subsection{Refutational Completeness}


We first show refutational completeness of the calculus for \ALC. The idea is to
present a procedure that constructs
a model from a saturated set of clauses that does not contain the empty clause.
The construction makes use of linear orderings on clauses and literals that determine how the
model is constructed. These orderings are also used in the optimized version of the calculus.
While this is a common construction to show refutational
completeness of resolution, the peculiarities of our method requires a more
refined ordering.

Note the special role of concept names occurring under a role restriction: if a
concept name $A$ occurs in a literal $\quant r.A$, then we assume that there
are no other positive occurrences of $A$, that is, literals with $A$ are either
of the form $\neg A$ or $\quant r.A$, where $\quant$ and $r$ are always the
same. Correspondingly, for every concept name $A$, positive occurrences of $A$
in the set of clauses are either only in literals of the form $A$, only in
literals of the form $\exists r.A$, or only in literals of the form $\forall
r.A$.

We now assume a linear order $\prec_l$ on literals such that, for all $A\in\NC$:
\begin{enumerate}
 \item\label{cond:exists} If $A_1$ occurs under an existential role restriction, and $A_2$ under a value restriction,
    then $\neg A_1\prec\neg A_2$.
 \item\label{cond:definers} If $A$ occurs under a role restriction, then $\neg A\prec_l L$ for any
literal $L$ that is not of the form $\neg A'$, where $A'$ occurs under a
role restriction.
 \item\label{cond:negative} If $A$ does not occur under a role restriction, then $A\prec_l\neg A$,
 \item\label{cond:quantifiers} $\exists r.A_1\prec_l\forall s.A_2\prec_l A_3$ for all $r,s\in\NR$ and
$A_1$, $A_2$, $A_3\in\NC$.
\end{enumerate}

We extend $\prec_l$ to a linear order $\prec$ on clauses, such that $C_1\prec C_2$
whenever $C_2$ contains a literal $L_2$ such that, for every literal $L_1$ in $C_1$, $L_1\prec C_2$.

\begin{theorem}\label{the:alc-calculus}
  For any set of clauses $\Nbf$ satisfying our constraints, our calculus
  derives a number of clauses that is at most exponential in
$\Nbf$, and derives the empty clause iff $\Nbf$ is inconsistent.
\end{theorem}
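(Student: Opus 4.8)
The plan is to prove the statement in three independent parts: the exponential bound on the number of derivable clauses, soundness (the \emph{only if} direction), and refutational completeness (the \emph{if} direction). The bound is a simple combinatorial observation. First I would note that none of the three rules introduces a fresh concept or role name: the conclusion of \textbf{A1}, \textbf{r1}, and \textbf{r2} is assembled only from literals already appearing in the premises (with the resolved literals deleted). Hence every derivable clause consists of literals of the forms $A$, $\neg A$, $\exists r.D$, $\forall r.D$ with $A,D\in\NC$ and $r\in\NR$ drawn from the finitely many symbols in $\Nbf$. Treating a clause as a \emph{set} of literals (so $\sqcup$ is idempotent, commutative, and associative), the number of distinct literals is linear in $\lvert\Nbf\rvert$, so the number of distinct clauses is at most $2^{O(\lvert\Nbf\rvert)}$. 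Since saturation can never leave this finite universe, the calculus derives at most exponentially many clauses and terminates.

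For soundness I would read each clause $L_1\sqcup\dots\sqcup L_n$ as the GCI $\top\sqsubseteq L_1\sqcup\dots\sqcup L_n$ and check that each rule's conclusion holds in every model of its premises. Rule \textbf{A1} is ordinary propositional resolution and is immediate. For \textbf{r1} and \textbf{r2} I would argue contrapositively: if an element $x$ falsifies the conclusion $C\sqcup C_1\sqcup\dots\sqcup C_n$, then $x\notin C^\Imc$ forces, via the first premise, an $r$-successor $y$ with $y\in D^\Imc$, while $x\notin C_i^\Imc$ together with the premises $C_i\sqcup\forall r.D_i$ forces $y\in D_i^\Imc$ for every $i$; the clash clause ($\neg D_1\sqcup\dots\sqcup\neg D_n$ for \textbf{r1}, and $\neg D\sqcup\neg D_1\sqcup\dots\sqcup\neg D_n$ for \textbf{r2}) is then violated at $y$, a contradiction. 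By induction on the derivation, the empty clause (denoting $\top\sqsubseteq\bot$) is derivable only when $\Nbf$ is inconsistent.

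The substantial part is refutational completeness, which I would establish by the standard model-construction method for ordered resolution, adapted to the forest-shaped models of \ALC. Assuming a saturated $\Nbf$ not containing the empty clause, I would build a (possibly infinite) tree model top-down, using $\prec_l$ and its clause extension $\prec$ to drive the construction. At each node $x$, equipped with a set of concept names it must satisfy (inherited from its parent's existential and value restrictions), I would fix the concept-name label of $x$ by processing the relevant clauses in $\prec$-increasing order, letting each clause whose smaller literals are already falsified produce its maximal literal; conditions~\ref{cond:exists}--\ref{cond:quantifiers} ensure that negative definer literals are $\prec_l$-minimal and that role-restriction literals are $\prec_l$-below concept-name literals, so that the induction on $\prec$ is well founded and an existential literal $\exists r.D$ is forced to be productive only when no concept-name literal can satisfy its clause. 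Whenever $\exists r.D$ is produced at $x$, I would create an $r$-child $y$ whose required labels are $D$ together with every $D_i$ coming from a clause $C_i\sqcup\forall r.D_i$ that $x$ falsifies, and recurse.

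The crux of the argument, and the step I expect to be the main obstacle, is showing that each such successor can be built consistently, i.e.\ that the required labels of $y$ cannot immediately violate a clause. This is exactly where saturation under \textbf{r1} and \textbf{r2} is needed: if a clash clause $\neg D_1\sqcup\dots\sqcup\neg D_n$ or $\neg D\sqcup\neg D_1\sqcup\dots\sqcup\neg D_n$ over the labels of $y$ were derivable, the corresponding rule would have produced $C\sqcup C_1\sqcup\dots\sqcup C_n$, which is falsified at $x$ by construction (since $\exists r.D$ was productive, $C$ is false at $x$, and each propagated $D_i$ stems from a clause with $C_i$ false at $x$), contradicting that $x$ satisfies all $\prec$-smaller clauses. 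Assembling the local satisfaction lemmas node by node then yields a model of $\Nbf$, contradicting its inconsistency; hence the empty clause must be derivable. The delicate points to get right are the precise interaction between the two quantifier rules and the ordering-based production (ensuring that the \enquote{irrelevance} of $D$ handled by \textbf{r1} versus \textbf{r2} is matched in the model construction) and the well-foundedness of the recursion on successors, both of which hinge on the special minimal placement of definer literals in $\prec_l$.
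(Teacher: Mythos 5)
Your proposal is correct and takes essentially the same approach as the paper: the identical counting and soundness arguments, and refutational completeness via an ordering-driven model construction in which saturation under \textbf{A1}, \textbf{r1}, \textbf{r2} is exactly what rules out clashes, with the special minimal placement of definer literals in $\prec_l$ ensuring the construction is well defined. The only difference is organizational: you build the tree model top-down with a per-node, ordered-resolution-style production, whereas the paper constructs a global sequence of interpretations (selecting the oldest element and its $\prec$-smallest falsified clause at each step), proves a monotonicity property, and then excludes the case of a falsified clause with maximal literal $\neg A$ by the same case analysis on how the element acquired its label that you sketch for successor clashes.
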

\begin{proof}
  As usual, we will use the symbol $\bot$ to denote the empty clause.
  The number of distinct literals occurring in $\Nbf$ is linearly bounded by
its size, and each derived clause is composed of literals occurring in $\Nbf$.
Since we represent clauses as sets, this establishes that the algorithm has to
terminate after exponentially many steps. Every rule in Fig.~\ref{fig:alc-calculus}
infers only clauses that are entailed by its premises. Consequently, if
$\bot$ is derived, then $\Nbf$ must be inconsistent. It remains to show that
if $\bot$ is not derived, then $\Nbf$ has a model.

Let $\Nbf^*$ be the set of clauses generated by \AlgALC, and assume
$\bot\not\in\Nbf^*$.
Using the order $\prec$ on clauses,
we construct an interpretation $\Imc$ as fixpoint of an unbounded
sequence of interpretations $\Imc_0$, $\Imc_1$, $\ldots$. 
The first interpretation~$\Imc_0$ is defined
by $\Delta^{\Imc_0}=\{d_0\}$ and $X^{\Imc_0}=\emptyset$ for all
$X\in\NC\cup\NR$. The next interpretations are built based on the previous one,
where in each step, we add at most one element to the domain. We may thus speak
of the \emph{oldest domain element} (that satisfies some condition), by which we mean the
domain element added first in the sequence of interpretations.
$\Imc_{i+1}$ is constructed from $\Imc_{i}$ as follows. If
$\Imc_{i}\models\Nbf^*$, then $\Imc_i=\Imc_{i+1}=\Imc$. Otherwise, there
is a clause $C\in\Nbf^*$ and a domain element $d\in\Delta^{\Imc_i}$ s.t.
$d\not\in C^{\Imc_i}$.
Choose such a pair $\tup{d,C}\in\Delta^{\Imc_i}\times\Nbf^*$
where, among all such
pairs, $d$ is the \emph{oldest} domain element (that is, the domain
element introduced in~$\Imc_j$ for the smallest index~$j$), and, for the selected $d$, $C$ is the smallest
clause according to the ordering $\prec$.
%
We then distinguish the following cases
based on the maximal literal $L$ in $C$ according to~$\prec_l$:
\begin{itemize}
 \item If $L=A\in\NC$, then $\Imc_{i+1}$ is obtained from $\Imc_i$ by adding
$d$ to $A^{\Imc_i}$,
 \item If $L=\exists r.D$, then $\Imc_{i+1}$ is obtained from $\Imc_i$ by
adding a new domain element $e$ and adding it to $D^{\Imc_i}$, as well as adding 
$\tup{d,e}$ to $r^{\Imc_i}$.
 \item If $L=\forall r.D$, then $\Imc_{i+1}$ is obtained from $\Imc_i$
by
adding each $r$-successor of $d$ to $D^{\Imc_i}$.
\end{itemize}
We observe that those steps ensure that $d\in C^{\Imc_i}$. 
We argue later that the case where $L=\neg A$ is not possible, so that those
cases are exhaustive. But before that, we would like to show the following
\emph{monotonicity property} of our construction: if $d$ is the domain element
selected for creating $\Imc_i$, and
for any clause $C'\in\Nbf^*$ s.t. $C'\prec C$, we have $d\in (C')^{\Imc_i}$,
then we also have $d\in (C')^{\Imc_j}$ for all $j>i$. This is clearly the case
if $d$ satisfies a literal of the form~$B$ or $\exists r.B$ in $C'$, since our
construction does not remove elements from the interpretations of concepts and
roles. If $d$ satisfies a literal of the form $\forall r.D$ in $C'$, we observe
that our ordering ensures that in no clause larger than $C'$, a literal of the
form $\exists r.D'$ can be maximal, so that no further successors can be added
to $d$.
Finally, if $d$ satisfies in $C'$ a literal of the form $\neg A$, we
first observe that our ordering ensures that $A$ cannot be maximal in a clause
larger than $C'$, since $A\prec\neg A$. Moreover, since in each step, we select
the \emph{oldest} domain element that still has unsatisfied clauses in
$\Nbf^*$, it is not possible that a predecessor of $d$ gets selected for some
$j>i$, since any predecessor would be older. Consequently, $d$ cannot be added
to $A$ due to a literal of the form $\forall r.D$ for any subsequent
interpretation.

We now show that the case where $L=\neg A$ is not possible: since $d\not\in C^{\Imc_i}$,
$L=\neg A$ would mean that $d\in A^{\Imc_i}$. Assume such a case is possible, and let $i$ be the 
smallest index for which this happens. $d$ must have been added to $A^{\Imc_i}$ in an earlier iteration
for one of the following reasons.
  \begin{enumerate}
      \item There is some $j<i$ such that $d\not\in C_1^{\Imc_j}$, where $C_1\in\Nbf^*$, and
            $A$ is the maximal literal in $C_1$. Rule~\textbf{A1} is applicable
            on
            $C_1$
            and $C$, yielding a clause $C_2$ that is also in $\Nbf^*$. This clause is
            smaller than both $C$ and $C_1$, since $A$/$\neg A$ is
            maximal in
            these clauses and does not occur in $C_2$. Consequently, $C_2$ would
            have been
            processed before $\Imc_j$ by this procedure, which would have
            ensured that for some $k<j$, $d\in C_2^{\Imc_k}$. By the monotonicity property of our 
            construction, this means that $d\in C_2^{\Imc_j}$. 
            Now $C_2$ is composed exactly of the literals in $C/C_1$ except $\lnot A/A$, but $d\not\in C_1^{\Imc_j}$,
            which means that
            $d\in C^{\Imc_j}$. Due to the monotonicity property of our
            construction, then also $d\in C^{\Imc_i}$. But this contradicts
            that $d$ and $C$ are selected to obtain $\Imc_{i+1}$.
      \item $d$ is an $r$-successor of another domain element and was
            added due to some $j<i$ and selected clause $C_1$ not
            satisfied in $\Imc_j$ in which the maximal
            literal is $\exists r.A$. In this case, we have
            $d\neq d_0$. 
            The maximal literal in $C$ is then of the form $\neg A$, with $A$ under
            an existential role restriction in $\Nbf$.
            By our ordering, this means that all other literals in $C$ must be
            negated (Condition~\ref{cond:definers}) and their concept names
            occur in $\Nbf$ under existential role restrictions
            (Condition~\ref{cond:exists}).
            We can indeed conclude from this that
            $\neg A$ is the only literal in $C$. Otherwise, there would be
            another literal $\neg B$ in $C$, where $B$ does not occur in a
            value restriction.
            (Recall that by our assumptions, if $B$ occurs positively under an existential role restriction,
            it cannot occur positively in a different way). Since $d\not\in C^{\Imc_i}$, also $d\in
            B^{\Imc_i}$, but there is no step in the construction that would add
            an existing domain element to a concept occurring under an
            existential role restriction.
            
            We obtain that $\neg A\in\Nbf^*$.
            But then, the \textbf{r2}-rule applies on this clause and $C_1$
            for the
            case of $n=0$,
            resulting in a clause $C_2$ that is obtained from
            $C_1$ by removing its maximal literal. Thus, $C_2\prec C_1$,
            which means that $C_2$ must have been processed before $C_1$.
            This in turn means that $\Imc_j\models C_2$, and thus
            $\Imc_j\models C_1$, so $C_1$ could not have been
            selected in Step~$j$.

      \item $d$ is an $r$-successor of another domain element $e$ and was
            added to $A^{\Imc_i}$ due to some $j<i$ and clause $C_1$ in
            which the maximal literal is $\forall r.A$. As before, we can argue
            that
            all literals in $C$ are of the form $\neg D$, with $D$
            occurring under a role restriction,
            and $C$ contains at most one literal $\neg D_\exists$, where
            $D_\exists$ occurs under an existential role restriction. In
            particular, $d$ was created as an $r$-successor of $e$
            due to a clause in which $\exists r.D_\exists$ is the maximal
            literal, and for every $\neg D$ in $C$, $d$ was added to the
            interpretation of $D$ due to a clause in which the maximal literal
            is $\forall r.D$. We observe that one of the rules \textbf{r1} or
            \textbf{r2} is applicable on those clauses together with $C$,
            resulting in a clause that is smaller and consequently must have
            been processed before all the other clauses, making at least one of
            these clauses satisfied for $e$, and contradicting that this clause
            was used to add $d$ to the interpretation of some $D$ such that $\neg D$
            occurs in $C$.
  \end{enumerate}
  As a consequence, we obtain that, in each step, one clause is satisfied
for one domain element for which it was not satisfied before. In the
limit, we obtain that $\Imc$ satisfies all clauses in $\Nbf^*$, and thus is a model of $\Nbf$.\qed
\end{proof}

\subsection{Optimizations}

From the construction in the proof of \Cref{the:alc-calculus}, we see that some clauses in $\Nbf^*$ are not
relevant for the refutational completeness, and can thus be discarded:
\begin{enumerate}
 \item \emph{Tautologies}, that is, clauses containing both $A$ and $\neg A$ for some $A\in\NC$
 are always satisfied and will thus never trigger an adaptation of the current interpretation. In our implementation,
 tautologies are never added to the current set of clauses.
 \item \emph{Subsumed clauses}, \ie clauses $C_1$ such that, for some other clause $C_2\in\Nbf^*$, $C_2\neq C_1$
    and every literal in $C_2$ also occurs in $C_1$. By our ordering, $C_2\prec C_1$, so that our model
    construction will consider $C_2$ before it considers $C_1$. By the monotonicity property, satisfiying $C_2$
    furthermore ensures that $C_1$ remains satisfied, and is thus never considered by the model construction.
    In our implementation, we use both \emph{forward subsumption}, that is, newly derived clauses that are
    subsumed by previously derived clauses are not added to the current set of clauses, and \emph{backward
    subsumption}, that is, after adding a new clause, we remove from the current set of clauses all
    clauses that are subsumed by it.
\end{enumerate}

We furthermore observe that the arguments in the proof only consider the maximal literals in a clause according
to the literal ordering $\prec_l$. For this reason, our method remains refutationally complete if we only perform
inferences on the maximal literal.

\hide{
The idea of set-of-support reasoning is that we keep a \emph{set of support} $\textbf{M}$
(called relevant clauses in the main text). All inferences are then performed with the side condition that
\begin{itemize}
 \item at least one premise is taken from the set of support,
 \item in that premise, the inference is performed on the maximal literal.
\end{itemize}
The newly derived clause is then added to the set of support. Different to this standard set-of-support strategy,
we additionally have to make sure that, if a clause containing a literal of the form $\quant r.A$ is added to the
set of support, then also all clauses from $N$ containing $\neg A$ need to be added to the set of support, and if a clause
containing $\neg A$ is added to the set of support, we also add all clauses from $N$ that contain $\quant r.A$.
This strategy
preserves refutational completeness provided that
\begin{enumerate}
 \item $\textbf{N}\setminus\textbf{M}$ is satisfiable ($\textbf{M}$ is needed to derive the unsatisfiability),
 \item for no roles $r_1$, $\ldots$, $r_n$, $\textbf{N}\models\top\sqsubseteq\forall r_1.\bot\sqcup\dots\sqcup\forall r_n.\bot$.
\end{enumerate}
These properties can be tested for $\textbf{N}$ using the ordered resolution approach, where we additionally perform all
inferences on existential role restrictions provided a clause contains only role restrictions.
In case the second condition fails, we just add the corresponding clauses to the set of support.

We argue that under these conditions, the reasoning procedure
remains refutationally complete even when using the modified set-of-support strategy. Denote by $M^*$ the final set of
support after termination of the reasoning procedure. We argue that the set
$N^*\cup M^*$ is closed under operations from our calculus under inferences on maximal literals. As a consequence,
under our assumptions on $N$, if $\bot\not\in M^*$, we can construct a model from $N^*\cup M^*$ as we did in the
proof for \Cref{the:alc-calculus}. Assume that $C$ can be derived involving at least one clause from $N^*$ and
at least one clause from $M^*$ on the maximal literals in these clauses. We distinguish the different cases
\begin{itemize}
 \item Assume $C$ is derived using $\textbf{A1}$ on clauses $C_1\in N^*$ and $C_2\in M^*$ on the concept name $A$.
   If $C_1\in N$, then
   $C\in M^*$ by our set of support strategy. Otherwise, $C_1\in(N\setminus N^*)$. Let $C_1'$, $\ldots$, $C_n'\in N$ be
   the clauses in $N$ that were used to infer $C_1$, and which contain the concept name $A$. We can apply $\textbf{A1}$
   on each $C_i'$ on $A$ (ignoring the ordering on literals), obtaining a clause $C_i''$ that is in $M^*$ (since
   $A$/$\neg A$ is maximal in $C_2\in M^*$). If we take the sequence of inferences deriving $C_1$ from $C_1'$, $\ldots$,
   $C_n'$, and replace each $C_i'$ by $C_i''$, we obtain the clause $C$.
 \item Assume $C$ is derived using $\textbf{r2}$. We can here argue similar as in the previous case, with the additional observation that if
 $\neg D\sqcup \neg D_1\sqcup\ldots\sqcup D_n\in M^*$, then all clauses from $N$ containing any of the names
    $D$, $D_1$, $\ldots$, $D_n$ under role restrictions are also
   in $M^*$, and if $\quant r.D$ occurs in $M^*$, then also all occurrences of $\neg D$ in $N$ are also in $M^*$.
  \item Assume $C$ is derived using $\textbf{r1}$. The main difference to the previous case is where
  the name $D$ occurs in $\M^*$, while all the other names under the role restrictions occur only in $\N$. Since we
  assume that $C$ is derived using only inferences on maximal literals, our ordering ensures that all
  involved clauses contain only role restrictions. In particular, except for the first premise, all other clauses
  only contain role restrictions over names occurring only in $N$.
  \patrick{continue here.}
\end{itemize}
}

\subsection{Generating \ALC Proofs}

Fix an \ALC ontology $\Omc$ and concepts $C$, $D$. To compute a proof for $\Omc\models C\sqsubseteq D$, we would
first introduce concept names $A_C$, $A_D$ for those concepts, and add the axioms $A_C\sqsubseteq C$, $D\sqsubseteq A_D$
(that is, we reduce to subsumption between concept names). As mentioned in the text, we use concept names
$A_\text{LHS}$, $A_\text{RHS}$ to track inferences for the proof. After normalizing, we add the clauses
$A_C\sqcup A_\text{LHS}$ and $\neg A_D\sqcup A_\text{RHS}$, which we add to the set of clauses. We make sure that
$A_\text{LHS}$ and $A_\text{RHS}$ are minimal in the ordering used, so that the clause $A_\text{LHS}\sqcup A_\text{RHS}$
or a subclause is derived if $\Omc\models C\sqsubseteq D$ (a subclause in the cases where the ontology is inconsistent
(empty clause), $C$ is unsatisfiable, or $\Omc\models\top\sqsubseteq D$). By tracking all inferences, we obtain a proof
for that entailment from $\Nbf$,
in which nodes are labeled with clauses. This proof still has to be transformed into a more readable DL proof for
$\Omc\models C\sqsubseteq D$. For this, we
\begin{itemize}
 \item regard clauses $C$ as GCIs $\top\sqsubseteq C$,
 \item replace $A_{LHS}$ everywhere by $\neg C$ and $A_{RHS}$ by everywhere by $D$, where $C$ and $D$ are the concepts
    in the subsumption $C\sqsubseteq D$ to be proved,
 \item replace $A_C$ by $C$ and $A_D$ by $D$, where $C$ and $D$ are the concepts of the subsumption to be proved,
 \item replace any concept names $A$ under role restrictions $\quant r.A$, which were used during normalization to
    flatten expressions $\quant r.C$, again by $C$,
 \item add additional inferences to link the non-tautologigal leafs of the proof to axioms from the ontology (from
 which they were obtained during normalization),
 \item apply the following transformations exhaustively on each clause to make them more human-readable, where
  we treat $\top$ as empty conjunction and $\bot$ as empty disjunction:
  \begin{itemize}
    \item $C\sqsubseteq\neg A\sqcup D \qquad \Longrightarrow \qquad C\sqcap A\sqsubseteq D$
    \item $C\sqsubseteq\forall r.\bot\sqcup D\qquad \Longrightarrow \qquad C\sqcap\exists r.\top\sqsubseteq D$
    \item $C\sqsubseteq\exists r.\neg D\sqcup E\qquad \Longrightarrow \qquad C\sqcap\forall r.D\sqsubseteq E$
    \item $C\sqsubseteq\forall r.\neg D\sqcup E\qquad \Longrightarrow \qquad C\sqcap\exists r.D\sqsubseteq E$
  \end{itemize}
\end{itemize}
The last step is not strictly necessary, but the final goal of the proof is to explain
the inference to the user, for which we want to minimize the number of negations used. As a result of the transformation, the clauses $A_C\sqcup A_\text{LHS}$ and
$\neg A_D\sqcup A_\text{RHS}$ that were added to the initial clause set get transformed into tautologies
$C\sqsubseteq C$ and $D\sqsubseteq D$.

\subsection{Integrating the Concrete Domain}

The reasoning algorithm keeps a set $\textbf{D}$ of currently relevant constraints. To compute all relevant
implications $\alpha_1\wedge\ldots\wedge\alpha_n\rightarrow\beta$, where $\alpha_1,\ldots,\alpha_n\in\textbf{D}$,
we use the proof procedure for the concrete domain to compute the set of possible inferences starting from
$\textbf{D}$. For each derived constraint $\beta$ for which $A_\beta$ occurs negatively in the set of clauses, we
follow those inferences back to the constraints in $\textbf{D}$ that were used to derive it, creating all such possible
sets. This way, we ensure that all implications $\alpha_1\wedge\ldots\wedge\alpha_n\rightarrow\beta$,
where $\{\alpha_1,\ldots,\alpha_n\}\subseteq\textbf{D}$ is subset-minimal, are discovered, so that we can add the corresponding
clauses to the current set of clauses. The procedure might also create some implications where the
set of constraints on the left-hand side is not subset-minimal. However, those clauses are immediately removed due
to forward-subsumption deletion.

\ThmALCDReasoning*
\begin{proof}
  The complexity of the method does not change, since we only add exponentially
many new clauses. Soundness follows from the fact that the added clauses all
correspond to valid entailments for the original ontology $\Omc$ if we replace
the fresh concept names again by the corresponding concrete constraints. For
completeness, we use the construction as in the proof of
\Cref{the:alc-calculus} to construct a model $\Imc$ for $\Nbf^*$, the final set
of clauses, in case $\bot\not\in\Nbf^*$.
$\Imc$ is a model for $\Omc^{-\Dmc}$, but we still have to modify it into a
model of $\Omc$. We argue that we can do that by ensuring that for every
$d\in\Delta^\Imc$:
\begin{enumerate}
 \item if $d\in A_D^\Imc$, then $d\in D^\Imc$,
 \item if $d\not\in A_D^\Imc$, and there is no clause $t: C\in\Nbf^*$ s.t. $t$
can be interpreted by $d$ (i.e., either $t=x$, or $t=a$ and $d=a^\Imc$), the
maximal literal of $C$ is $\neg A_D$, and $d\not\in L^\Imc$ for any other literal in
$C$, then $d\not\in D^\Imc$.
\end{enumerate}
If we can modify $\Imc$ like that, then we obtain a model of $\Omc$.

For a domain element $d\in\Delta^\Imc$, we collect the
sets $\Dbf$ and $\overline{\Dbf}$, where $\Dbf=$ $\{D\in\Cmc(\Omc)\mid d\in
A_D^\Imc\}$ and $\overline{\Dbf}$ contains all the $D\in\Cmc(\Omc)$ for which
$A_D$ satisfies the second condition above.

We first observe that
$\Dmc\not\models\bigwedge\Dbf\rightarrow\bot$, since the concept names
corresponding to the constraints in $\Dbf$ must occur as maximal literals in
some clauses in $\Nbf^*$, and we would thus have added a clause $x:\bigsqcup_{D\in\Dbf'}\neg A_D$ for some subset $\Dbf'\subseteq\Dbf$ if $\Dbf$ was inconsistent. This clause would not be
satisfied by $\Imc$, contradicting that $\Imc\models\Nbf^*$. We can thus extend
$\Imc$ in such a way that, for each $D\in\Dbf$, $d\in D^{\Imc'}$. It remains to
show that we can also ensure that for every $D\in\overline{\Dbf}$, $d\not\in
D^{\Imc'}$. We first observe that for no $D\in\overline{\Dbf}$,
$\Dmc\models\bigwedge\Dbf\rightarrow D$. This is because $\overline{\Dbf}$ was
carefully chosen to ensure that if $\Dmc\models\bigwedge\Dbf\rightarrow D$,
there would be some subset $\Dbf'\subseteq\Dbf$ for which we would
have added the clause $x:\bigsqcup_{D'\in\Dbf'}\neg A_{D'}\sqcup A_D$.
Since $d$ does not satisfy the concept, this would contradict that $\Imc$ is a
model of $\Nbf^*$. Because $\Dmc$ is convex, it follows furthermore that
$\Dmc\not\models\bigwedge\Dbf\rightarrow\bigvee\overline\Dbf$.
Consequently, we can find an assignment of the
concrete
features that ensures that $d\in D^{\Imc'}$ for all $D\in\Dbf$ and $d\not\in
D^{\Imc'}$ for all $D\in\overline{\Dbf}$. It follows that we can extend $\Imc$
to a model of $\Omc$ as required.\qed
\end{proof}

To produce a combined proof in \ALCD, we again follow the approach described 
in Section~\ref{sec:combining-proofs}, but for now we always use the left-hand 
side $A_{\alpha_1}\sqcap\dots\sqcap A_{\alpha_n}$ as the context for the CD 
proof of $\alpha_1\land\dots\land\alpha_n\to\beta$, since it is not trivial to 
find other meaningful DL contexts for proofs generated from the calculus in 
Fig.~\ref{fig:alc-calculus}.




\subsection{The Complexity of Finding Good Proofs}

To investigate the complexity of our approaches and prove Theorem~\ref{thm:complexitygoodproofs}, we use the formal framework from~\cite{DBLP:conf/lpar/AlrabbaaBBKK20,DBLP:conf/cade/AlrabbaaBBKK21}, which we shortly introduce in the following.
A \emph{derivation structure} for an entailment $\Tmc\models\eta$ in a logic~\Lmc is a directed, labeled hypergraph~$(V,E,\ell)$ where
\begin{enumerate}
  \item vertices are labeled with $\Lmc$-sentences, 
  \item every leaf is labeled by an axiom from $\Tmc$, and
  \item every hyperedge $(S,d)\in E$ is an inference satisfying $\{\ell(v)\mid
 v\in S\}\models\ell(d)$.
\end{enumerate}
Here, a \emph{leaf} is a node $v\in V$ without an incoming hyperedge $(S,v)\in E$, and a \emph{sink}~$v\in V$ has no outgoing hyperedges $(S,d)\in E$ with $v\in S$.
A \emph{proof} for $\Tmc\models\eta$ is such a derivation structure that, additionally,
\begin{enumerate}[resume]
  \item is tree-shaped, \ie has no cycles in the relation $\{(s,d)\mid (S,d)\in E,\ s\in S\}$,
  \item has a unique sink labeled by the final conclusion~$\eta$, and
  \item has no two hyperedges $(S,v),(S',v')\in E$ with $\ell(v)=\ell(v')$.
\end{enumerate}
 
As in \cite{DBLP:conf/lpar/AlrabbaaBBKK20,DBLP:conf/cade/AlrabbaaBBKK21}, we consider a so-called \emph{deriver}~\R, which produces derivation structures $\R(\Tmc,\eta)$ that contain all inference steps relevant for a proof of $\Tmc\models\eta$ (but they can encompass many possible ways of deriving~$\eta$).
We are interested in finding a proof that can be homomorphically mapped into $\R(\Tmc,\eta)$, and whose \emph{size} (number of vertices) is below a given threshold (a \enquote{small proof}).
Reasoners for \ELbot, such as \Elk, produce derivation structures of polynomial size, and the problem of finding small proofs in such structures is \NP-complete \cite{DBLP:conf/lpar/AlrabbaaBBKK20}.
For derivation structures that are of exponential size in general, such as for \ALC, this problem is \NExpTime-complete \cite{DBLP:conf/lpar/AlrabbaaBBKK20}.
If we replace \emph{size} by \emph{recursive} measures such as \emph{depth} (the length of the longest path from the sink to a leaf), the complexity drops to \PTime and \ExpTime, respectively \cite{DBLP:conf/cade/AlrabbaaBBKK21}.

To determine the complexity of finding small proofs in \ELbotD and \ALCD, we additionally need to consider the size of the derivation structures for \DQlin and \DQgr, which are then integrated into the pure DL proofs.
For \DQgr, Theorem~\ref{thm:cd2-reasoning} shows that the derivation structures constructed by Algorithm~\ref{alg:cd2-reasoning} are always of polynomial size, which does not change the complexity of finding proofs compared to the case of DLs without concrete domains.
For \DQlin, however, derivation structures can be of exponential size: Although Gaussian elimination produces at most quadratically many inference steps in the number of variables that occur in the constraints, there are exponentially many possible orders in which the variables could be eliminated and different choices of constraints to use for eliminating a variable, each of which yields a different proof.
To alleviate this problem in our implementation, we normalize all equations after each elimination step, by reducing the coefficient of the leading variable (according to a fixed variable order) to~$1$, and adjusting the other coefficients accordingly.
For example, the elimination step
\[
  \AXC{$4x-6y=1$}
  \AXC{$2x+3y=5$}
  \RL{\scriptsize$[1,-2]$}
  \BIC{$-12y=-9$}
  \DP
\]
from the previous example would become
\[
  \AXC{$4x-6y=1$}
  \AXC{$2x+3y=5$}
  \RL{\scriptsize$[-\frac{1}{12},\frac{1}{6}]$}
  \BIC{$y=\frac{3}{4}$}
  \DP
\]
Nevertheless, the overall size of the derivation structure stays exponential in the worst case.

\complexitygoodproofs*
\begin{proof}
  In the cases involving \DQgr or \ALCD, we obtain polynomial (exponential) structures for \ELbotD (\ALCD) entailment problems by transforming the concrete domain derivation structures and integrating them into the classical DL derivation structures as described in Lemma~\ref{lem:transformation}.
  In \ELbotD, our method considers only polynomially many \Dmc-implications, while for \ALCD we obtain exponentially many \Dmc-derivation structures (of polynomial or exponential size), which does not affect the exponential size of the derivation structures for \ALC.
  Hence, we can apply the classical algorithms for finding proofs of a given maximal size or depth in the combined structures~\cite{DBLP:conf/lpar/AlrabbaaBBKK20,DBLP:conf/cade/AlrabbaaBBKK21}.
  
  For \ELbotD[\DQlin], the idea is to guess a substructure of the combined derivation structure in polynomial time, and then verify that it is indeed a proof of the required size.
  Since the \ELbot-parts of the derivation structure are of polynomial size, we can guess those parts in \NP.
  For all guessed axioms $\bigsqcap_{D\in\Dbf_C}A_D\sqsubseteq A_E$, we then need to guess a corresponding proof of $\bigwedge\Dbf_C\to E$ in \DQlin.
  However, we know that such a proof needs at most polynomially many variable elimination steps (at most one for each variable in each involved constraint), which correspond to inference steps.
  Hence, we can guess in polynomial time a variable elimination order and, for each constraint~$\alpha$ and variable~$x$, a constraint that is used to eliminate~$x$ from~$\alpha$.
  \qed
\end{proof}

\section{Implementation and Experiments}
We implemented the algorithms described above and evaluated their performance and the produced proofs on a series of benchmarks.
The implementation uses the Java-based OWL API 4 to interact with DL ontologies, but uses new data structures for representing concrete domains.
Although there is a proposal for extending OWL with concrete domain predicates of arities larger than~$1$,\footnote{\url{https://www.w3.org/2007/OWL/wiki/Data_Range_Extension:_Linear_Equations}} this is not part of the OWL 2 standard.\footnote{\url{https://www.w3.org/TR/owl2-overview/}}
To extract proofs from the collections of inference steps produced by our concrete domain reasoning algorithms, we used a Dijkstra-like algorithm that minimizes the size of the produced proof~\cite{DBLP:conf/cade/AlrabbaaBBKK21}.
For efficiency reasons, for proofs in \DQlin, we fix a variable order for the Gaussian elimination steps, instead of considering all possible orders in which variables could be eliminated.


Returning to Example~\ref{ex:medical}, we can split it into two tasks to demonstrate proofs in both concrete domains, where we have added information
on the status of our current patient using the GCIs $\ex{CurrentPatient} 
\sqsubseteq [\ex{age} = 42]$, $\ex{CurrentPatient} \sqsubseteq [\ex{hr} = 
173]$, and $\ex{CurrentPatient} \sqsubseteq [\ex{pp} = 65]$. Resulting proofs 
of  $\ex{ICUpatient} \sqsubseteq \ex{NeedAttention}$ are shown in 
Fig.~\ref{app:fig:patientcd1} and Fig.~\ref{app:fig:patientcd2}. 
\begin{figure}
\centering
\includegraphics[width=\textwidth]{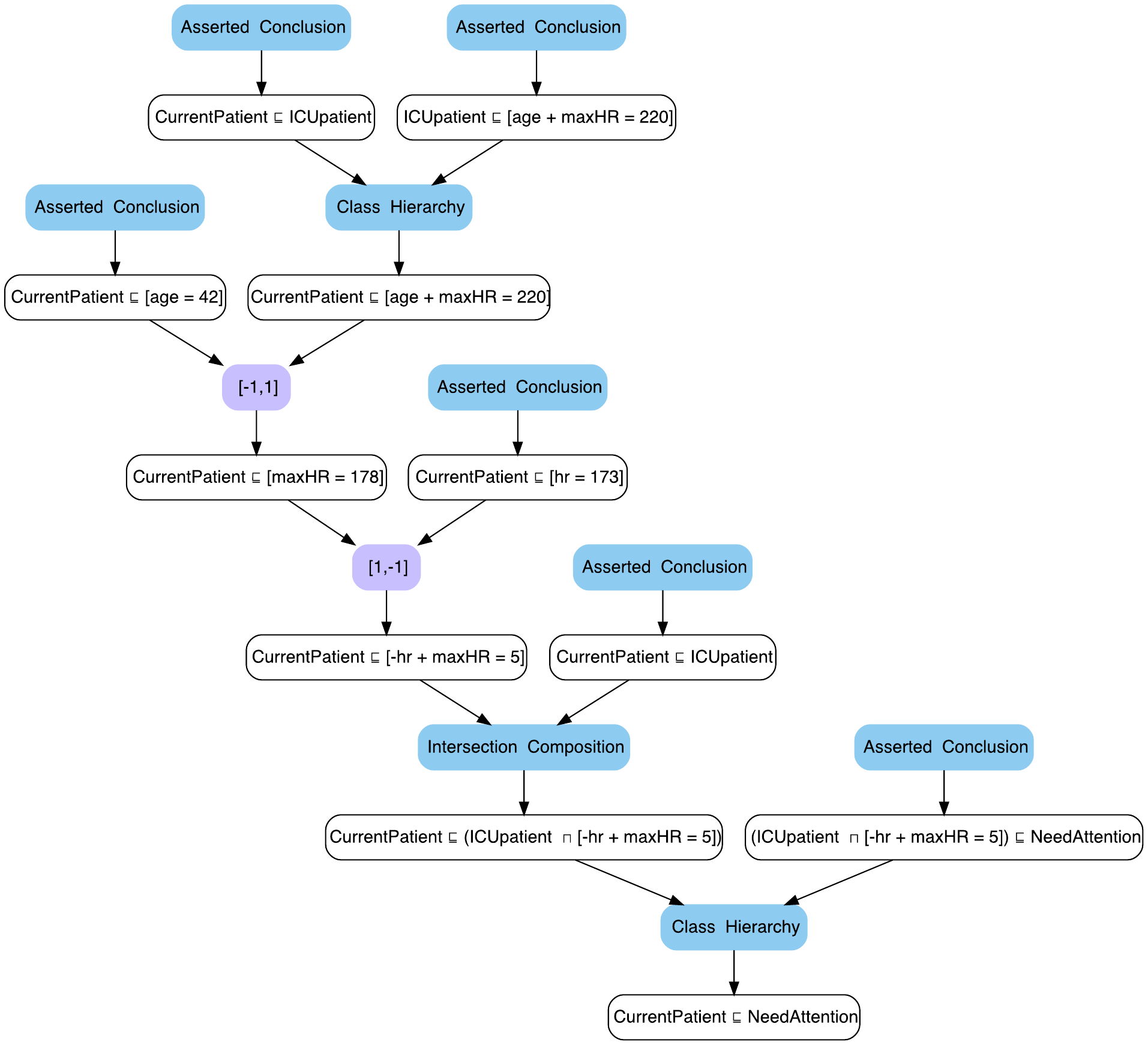}
\caption{Showing $\ex{CurrentPatient} \sqsubseteq \ex{NeedAttention}$ using \DQlin} \label{app:fig:patientcd1}
\end{figure}
\begin{figure}
\centering
\includegraphics[width=0.6\textwidth]{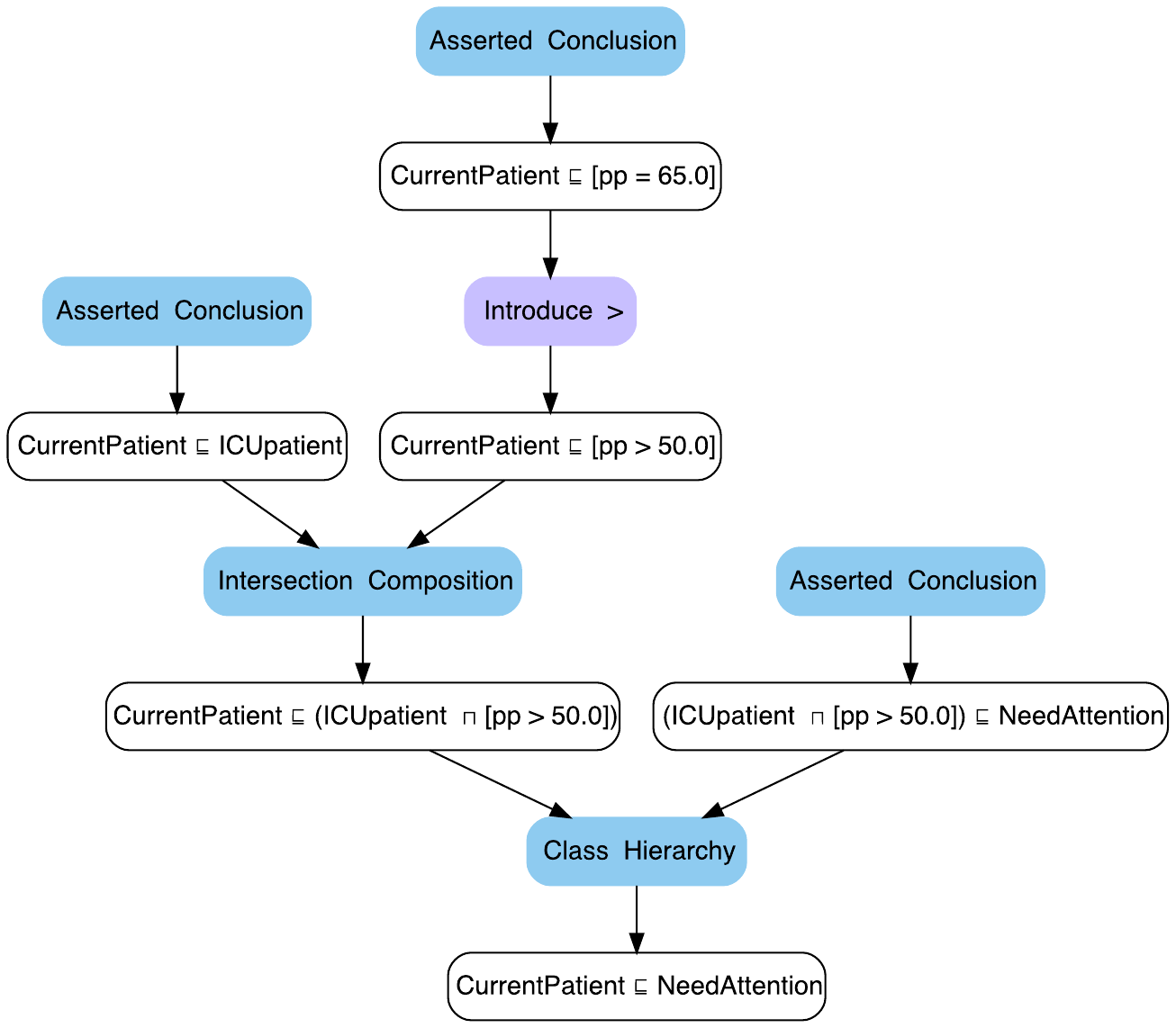}
\caption{Showing $\ex{CurrentPatient} \sqsubseteq \ex{NeedAttention}$ using \DQgr}\label{app:fig:patientcd2}
\end{figure}

\subsection{Benchmarks}

In the following, we describe several ontologies that we developed to evaluate 
our implementation.
Unfortunately, existing reasoning tasks for DLs with concrete domains, \eg from 
Racer,\footnote{\url{https://github.com/ha-mo-we/Racer}} are not expressive 
enough to test our algorithms; the CD values are used only as constants and do 
not influence the reasoning.
Some of our benchmarks are scalable in the sense that they are based on similar ontologies, but one can increase their size, \eg by increasing the number or size of axioms or constraints.
All benchmarks are formulated in \ELbotD.

\paragraph{Simple benchmarks.} For \DQlin, there are two basic demonstration examples, \emph{Drones} and \emph{Coffee}. The main task in \emph{Drones} is to derive the fraction of impaired sensors and propellers of a drone.

In \emph{Coffee}, for different types of coffee such as cappuccino, ristretto, macchiato, \etc, we define
the proportions of components such as espresso, steamed milk, foam, \etc Consequently, we can identify a coffee drink based on the amounts of its components given in some unit like ml or oz.

\paragraph{Scalable benchmarks.} For testing the system behavior on inputs of 
increasing size, we provide four benchmarks, two for \DQlin and two for \DQgr. 

In \emph{Diet}$(n)$, given a person's daily consumption as a list of $n$ products and their calories from fat, protein, and carbs, we check constraints about the consumption, such as \enquote{full fat}, \enquote{full protein}, \enquote{full carbs}, \enquote{well-balanced} (55\% carbs, 20\% protein, 25\% fat), \enquote{lower carb} (45\% carbs, 25\% protein, 30\% fat), and \enquote{lower carb and fat} (45\% carbs, 30\% protein and 25\% fat), expressed in \DQlin. The parameter $n$ describes the size of the linear constraints.

To scale both the DL and CD parts, we created the benchmark \emph{Artificial}$(n)$ over \DQlin. With increased~$n$, the number of intermediate concepts in the concept hierarchy and in proofs also increases, \ie $A\sqsubseteq C_0 \sqsubseteq \dots \sqsubseteq C_{n-1} \sqsubseteq B $. Moreover, to show each step, the concrete domain reasoner has to derive a linear constraint from $n$ given constraints. Thus, the size of a proof for $A \sqsubseteq B$ grows quadratically in~$n$.

For \DQgr, there are \emph{D-Sbj}$(n)$ and \emph{D-Obj}$(n)$. In \emph{D-Sbj}$(n)$, there is one main actor, which is a drone. The parameter $n$ quantifies the number of other objects in the world. The reasoner needs to show that all objects are at a \enquote{safe} distance from the drone.

The benchmark \emph{D-Obj}$(n)$ is somehow orthogonal to \emph{D-Sbj}$(n)$. The world contains $n$ drones and exactly $3$ other objects, \eg humans or trees. Now the distances between all drones and objects are taken into consideration. Similarly to the subjective version, the reasoner needs to find out whether all objects are at \enquote{safe} distances.

\paragraph{Experiments.}

Our findings for the \ELbotD algorithms are summarized in Tables~\ref{tab:results} and~\ref{tab:description}.
The first two benchmarks consist of a single reasoning problem each, the remaining four represent series of small to medium-sized ontologies, for which we report ranges in each column.
For the scalable benchmarks, we consider only the instances which terminated and, for each size, computed average times over three randomly generated instances of that size.
The underlined benchmarks are in \DQgr, the rest in \DQlin.
Columns 2--4 show the number of axioms, number of constraints, and average number of variables per constraint in the ontology, respectively.
Column 5 aggregates the number of occurrences of any name (concept, role or feature) in the ontology.
Columns 6--7 list the time (in ms) required for classification and proof generation, respectively,
\enquote{$\mathrm{\Sigma}$} denotes the total number of instances, and
\enquote{\#Fin} denotes the number of instances for which proof generation finished before a timeout of 3\,min.
\begin{table}[tb]
  \centering
  \caption{Results of the experiments with the \ELbotD algorithms.
  }
  \label{tab:results}
  \begin{tabular}{lccccccc}
    \toprule
    Name & Axioms & Constraints & Variables & Problem Size & Time & Proof Time & \#Fin/$\mathrm{\Sigma}$ \\
    \midrule
    \emph{Coffee} & 49 & 21 & 2.3 & 146 & 329 & 109 & 1/1 \\
    \emph{Drones} & 93 & 11 & 2 & 254 & 195 & 100 & 1/1  \\
    \midrule
    \emph{Diet}  & 26--194 & 15--99 & 2--40 & 81--865 & 61--321 & 
    45--2659 & 8/8 \\
    \emph{Artificial} & 9--24 & 2--13 & 3--5 & 25--144& 17--347 & 18--  & 4/6 
    \\
    \underline{\emph{D-Sbj}} & 55--204 & 20--85 & 1.5 & 191--1101 & 
    220--686 & 166--482 & 8/8 \\
    \underline{\emph{D-Obj}} & 122--427 & 24--76 & 1.5 & 427--2129 & 
    315--1437 & 417--2099 & 8/8 \\
    \bottomrule
  \end{tabular}
\end{table}

\begin{table}
  \centering
  \caption{Supplemental experiment data for the \ELbotD algorithms (\cf Table~\ref{tab:results}).
  \enquote{\%DL} denotes the fraction of the reasoning time that was used by the calls to \Elk.
  \enquote{\%Incr} denotes the fraction of \Elk reasoning time that would have been required for a single call to \Elk on the final saturated ontology~$\Omc'$.
  %
  %
  %
  The last two columns describe the computed proofs in terms of their tree size 
  and \enquote{\%CD}, the average fraction of the steps of the computed proof 
  that are due to concrete domain inference steps.
  For the scalable benchmarks, we report either ranges or averages with standard deviations (\textit{sd}) in each column.}
  \label{tab:description}
  \centering
  \begin{tabular}{lcccccc}
    \toprule
    Name & \%DL(\textit{sd}) & \%Incr(\textit{sd}) & Proof Size & \%CD(\textit{sd}) \\
    \midrule
    \emph{Coffee} & 41 & 20 & 18 & 11 \\
    \emph{Drones} & 83 & 51 & 7 & 11 \\
    \midrule
    \emph{Diet} & 23(7) & 58(6) & 22--166 & 37(10) \\
    \emph{Artificial} & 75(15) & 78(16) & 10--51 & 18(5) \\
    \emph{D-Sbj} & 52(2) & 76(8) & 45--584 & 6(0) \\
    \emph{D-Obj} & 63(2) & 86(6) & 184--1232 & 8(0) \\
    \bottomrule
  \end{tabular}
\end{table}


}
\fi
\end{document}